\newcommand{\BO}[1]{{O}\left(#1\right)}
\newcommand{\BT}[1]{{\Theta}\left(#1\right)}
\newcommand{\m}{M$_3$\xspace}
\newcommand{\sn}{\sqrt{n}}
\newcommand{\sm}{\sqrt{m}}
\newcommand{\snm}{\sqrt{n/m}}
\renewcommand{\l}{\langle}
\renewcommand{\r}{\rangle}
\newcommand{\pair}[1]{\left\langle #1 \right\rangle}
\newcommand{\Let}[2]{#1 $\leftarrow$ #2}
\renewenvironment{proof}{\emph{Proof.}}{\hfill$\square$}
\title{Experimental Evaluation of Multi-Round Matrix Multiplication on
MapReduce}
\author{Matteo Ceccarello \and Francesco Silvestri}
\date{University of Padova, Dip. Ingegneria dell'Informazione, Padova,
Italy,  \texttt{\{ceccarel,silvest1\}@dei.unipd.it}}
\begin{document}

\maketitle

\begin{abstract}
  A common approach in the design of MapReduce algorithms is to
  minimize the number of rounds. Indeed, there are many examples in
  the literature of monolithic MapReduce algorithms, which are
  algorithms requiring just one or two rounds. However, we claim that
  the design of monolithic algorithms may not be the best approach in
  cloud systems.  Indeed, multi-round algorithms may exploit some
  features of cloud platforms by suitably setting the round number
  according to the execution context.

  In this paper we carry out an experimental study of multi-round
  MapReduce algorithms aiming at investigating the performance of the
  multi-round approach.  We use matrix multiplication as a case
  study. We first propose a scalable Hadoop library, named \emph{\m},
  for matrix multiplication in the dense and sparse cases which allows
  to tradeoff round number with the amount of data shuffled in each
  round and the amount of memory required by reduce functions. Then,
  we present an extensive study of this library on an in-house cluster
  and on Amazon Web Services aiming at showing its performance and at
  comparing monolithic and multi-round approaches.  The experiments
  show that, even without a low level optimization, it is possible to
  design multi-round algorithms with a small running time overhead.

\end{abstract}


\section{Introduction}\label{introduction}
MapReduce is a computational paradigm\footnote{In the paper, the term
  MapReduce denotes the abstract programming model and {not} the
  particular implementation developed by Google.} for processing
large-scale data sets in a sequence of rounds executed on
conglomerates of commodity servers~\cite{DeanG08}.  This paradigm, and
in particular its open source implementation Hadoop~\cite{Hadoop12},
has emerged as a de facto standard and has been widely adopted by a
number of large Web companies (e.g., Google, Yahoo!, Amazon,
Microsoft) and universities (e.g., CMU, Cornell)~\cite{KarloffSV10}.
MapReduce was initially introduced for log processing and web indexing
but it has also been successfully used for other applications,
including machine learning~\cite{Mahout11}, data
mining~\cite{AfratiSSU13}, scientific computing~\cite{SriramaJV12},
and bioinformatics~\cite{Taylor12}.  Informally, a MapReduce algorithm
transforms an input multiset of key-value pairs into an output
multiset of key-value pairs in a sequence of \emph{rounds}.  Each
round consists of three steps: each input pair is individually
transformed into a multiset of new pairs by a map function (\emph{map
  step}); then the new pairs are grouped by key (\emph{shuffle step});
finally, each group of pairs with the same key is processed,
separately for each key, by a reduce function that produces the next
new set of key-value pairs (\emph{reduce step}).
The MapReduce paradigm has a functional flavor, in that it merely
requires that the algorithm designer specifies the computation in
terms of map and reduce functions.  This enables parallelism without
forcing an algorithm to cater for the explicit allocation of
processing resources. Nevertheless, the paradigm implicitly posits the
existence of an underlying unstructured and possibly heterogeneous
parallel infrastructure, where the computation is eventually run.  For
these reasons, the MapReduce paradigm has gained popularity in recent
years in cloud services for the development of large scale
computations.  Indeed, MapReduce is currently offered as a service by
prominent cloud providers, such as Amazon Web Services and Microsoft
Azure.

As MapReduce is increasingly used for solving computational hungry
problems on large datasets, it is crucial to design efficient and
scalable algorithms.  Many research efforts have been dedicated to
capture efficiency in MapReduce algorithms
(e.g.,\cite{LinS10,KarloffSV10,PietracaprinaPRSU12,AfratiSSU13,GoodrichSZ11}).
The major source of inefficiency is the communication required for
moving data from mappers to reducers in the shuffle step: since
communication is a major factor determining the performance of
algorithms on current computing systems, the amount of shuffled data
should be minimized.  Another issue that is usually taken into account
for improving performance is the number of rounds of a MapReduce
algorithm, since the initial setups of a round and the shuffle step
are very costly operations.  Thus, several MapReduce algorithms have
been designed that require a very small number, usually one or two, of
rounds (e.g.,~\cite{AfratiSSU13,KarloffSV10}).  We denote with
\emph{monolithic algorithm} a MapReduce algorithm requiring one or two
rounds.  A common approach for obtaining a monolithic algorithm is to
decompose the problem into small subproblems which are executed
concurrently in a single round, with each subproblem solved by a
single application of the reduce function.

In this paper, we claim that the design of monolithic algorithms may
not be the best approach for long running MapReduce computations in
cloud systems.  Although it is true that the best performance is
usually reached by reducing the round number, monolithic MapReduce
algorithms may not exploit some features of cloud computing. Some
examples follow.

\begin{itemize}[itemsep=-1ex,topsep=-2ex]
\item \emph{Service market.} Some cloud providers offer a market where
  it is possible to bid on the cost of a service and to use it only
  when the actual price is below the bid.  For reducing computing
  costs of long but low-priority computations, it would be desirable
  to develop MapReduce algorithms that can be stopped and restarted
  according to the price of the service.  Unfortunately, current
  implementations of the MapReduce paradigm do not allow to stop a
  computation at an arbitrary point and then restart it.  However, it
  is possible, like in Hadoop, to restart a computation from the
  beginning of the round that has been interrupted, losing the work
  that was already executed in that round.  This clearly penalizes
  monolithic algorithms that consist of just a few long rounds:
  stopping a round and waiting for a better price might not be
  convenient since a significant part of previous work has to be
  executed again.  It would be then interesting to develop MapReduce
  algorithms featuring a larger number of short rounds in order to
  reduce the amount of discarded work.
 
\item \emph{Resource requirements.} Some MapReduce algorithms need
  strong resource requirements in order to be monolithic.  For
  instance, the multiplication in a semiring of two dense matrices of
  size $\sn\times \sn$ in two rounds
  \emph{must}~\cite{PietracaprinaPRSU12} exchange approximately
  $n\snm$ pairs during each shuffle, where $m$ denotes an upper bound
  to the memory that can be used by a map/reduce function. This amount
  of data is linear in the input size only if $m\sim n$, that is, when
  the multiplication can be almost solved sequentially. Similar
  requirements are also needed for joining
  relations~\cite{AfratiSSU13} and enumerating triangles in a
  graph~\cite{ParkSKP14}.  In a big-data era, the required local
  memory may exceed system resources and the huge amount of data in
  the shuffle step could arise issues related to network performance
  and system failure.  Indeed, the network may be subject to
  congestion due to the large amount of data created within a small
  time interval, penalizing the scalability and fault tolerance of the
  network.  It is then desirable to design algorithms that can
  tradeoff round number and resource requirements.  We also observe
  that distributing a large computation among different rounds may
  help to checkpoint the computation and thus to restore it if the
  system completely fails or
  is heavily penalized by a fault.\\
\end{itemize}

It is then interesting to investigate multi-round algorithms where the
round number can be set according to the execution context.  However
it is challenging to guarantee performance similar to the ones
provided by monolithic algorithms.  The first issue is that, by
distributing the computation among different rounds, the total amount
of communication in the shuffle steps should not increase with respect
to the monolithic version since, as we have already mentioned,
communication is the main bottleneck.  Moreover, the latency required
by each new round should be amortized by a sufficient amount of
computation and communication performed within the round.

\paragraph{Our results.}
In this paper we carry out an experimental study of multi-round
MapReduce algorithms for matrix multiplication aiming at investigating
the performance of the multi-round approach.  Matrix multiplication is
an important building block for many problems arising in different
contexts, in particular scientific computing and graph
processing~\cite{GolubL12}.  
Furthermore, we believe that matrix
multiplication is an interesting problem for studying multi-round
algorithms since its high, but still tractable, communication and
computation requirements allow to significantly load the system and to
assess the performance under stress.
The results provided in the paper are the following:

\begin{enumerate}
 \item We propose a scalable Hadoop library, named \emph{\m} (\textbf{M}atrix
\textbf{M}ultiplication in \textbf{M}apReduce), for performing dense and sparse
matrix multiplication in Hadoop.  The library is based on the multi-round
algorithms proposed in~\cite{PietracaprinaPRSU12} which exploit a 3D
decomposition of the problem.  We fill the gap between the
theoretical and high-level results in~\cite{PietracaprinaPRSU12} and the actual
implementation by providing a detailed description of the map and reduce
functions.
For the sake of
completeness, \m also contains a MapReduce algorithm for
dense matrix multiplication based on a 2D approach, which is described in 
Section~\ref{sec:2D}. The algorithms exhibit a tradeoff among round number, the
amount of data in the shuffle steps and the amount of memory required by each
reduce function. The library is publicly available at
\texttt{http://www.dei.unipd.it/m3}.
 
\item We carry out an extensive experimental evaluation of the \m
  library on an in-house cluster and on Amazon Elastic MapReduce, and
  show that the 3D algorithms efficiently scale with increasing input
  size and processor number.  We compare the performance of the
  multi-round approach with the monolithic one by measuring the
  performance of our algorithms when the round number increases.  The
  experiments show that the running times are mainly dominated by the
  amount of communication, while the round number has a limited impact
  on performance. This fact suggests that the common use (in
  particular in more application-oriented contexts) to only focus on
  round number when designing MapReduce algorithms is not a best
  practice for improving performance. The results also give evidence
  that multi-round algorithms have performance comparable with
  monolithic ones (assuming a similar total amount of communication)
  even on the Hadoop framework, which is not the most suitable
  MapReduce implementation for executing multi-round
  algorithms. Indeed, the architectural choice of Hadoop to store
  pairs between rounds on the distributed file system HDFS
  significantly penalizes the performance. We believe that other
  implementations (e.g., Spark~\cite{Zaharia10}) may almost remove the performance gap
  and stimulate the adoption of
  multi-round algorithms.\\
\end{enumerate}

We remark that the performance
of our algorithms should not be compared with state-of-the-art
research on parallel sparse/dense matrix multiplication in High
Performance Computing (e.g.,~\cite{BulucG12}).  Indeed, we are
interested in investigating algorithm performance in cloud systems
with MapReduce. In these settings, abstraction layers add a
significant burden to algorithm performance. Research on the reduction
of this load is crucial (see, e.g.~\cite{Zaharia10,Plimpton11}), but
it is out of the scope of this paper.

\paragraph{Previous Work.}
The MapReduce~\cite{DeanG08} paradigm has been widely studied and we
refer to the book by Lin and Dyer~\cite{LinD10} for a survey.  The
design of efficient MapReduce algorithms has been investigated from
practical and theoretical perspectives.  For instance, best practices
in designing large-scale algorithms in MapReduce are proposed
in~\cite{Afrati12,LinS10}, while theoretical models for analyzing
MapReduce are studied
in~\cite{KarloffSV10,PietracaprinaPRSU12,GoodrichSZ11,AfratiSSU13}.
The majority of the algorithms requires just a few rounds, although
monolithic solutions are not known for some problems (e.g., computing
the diameter of a graph~\cite{CeccarelloPPU14} or matrix
inversion~\cite{PietracaprinaPRSU12}).
Multi-round algorithms that trade round number with resource
requirements have been proposed, with a theoretical approach,
in~\cite{PietracaprinaPRSU12} for some linear algebra problems,
including matrix multiplication, in~\cite{GoodrichSZ11} for sorting
and searching, and in~\cite{ParkSKP14} for triangle enumeration. To
the best of our knowledge, the only experimental analysis of
multi-round algorithms has been provided in~\cite{ParkSKP14} for
triangle enumeration and shows that the performance of a multi-round
approach is equivalent to a monolithic one.

Matrix multiplication is one of the most studied problems in the
literature and we refer to the book by Golub and Van
Loan~\cite{GolubL12} for further insights.  However, to the best of
our knowledge, this problem has been scarcely studied in MapReduce:
ignoring naive implementations available on the web, the only
scientific works are the aforementioned~\cite{PietracaprinaPRSU12},
and~\cite{Seo10}. The latter introduces a library, named HAMA,
including MapReduce algorithms for matrix multiplication and conjugate
gradient for matrices stored in HBase databases.

\paragraph{Comparison with previous work.}
This work completes the theoretical investigations
in~\cite{PietracaprinaPRSU12} by integrating the original description
with more details that allow to fill the gap between theory and
practice, and by proposing an Hadoop implementation.  The \m library
includes as special case the algorithms in HAMA~\cite{Seo10}: the
iterative approach proposed in HAMA requires $\sn$ rounds for
multiplying a matrix of size $\sn\times \sn$ and is a special case of
the algorithm based on the 2D approach described in
Section~\ref{sec:2D} (i.e., it suffices to set $m=\sqrt{n}$ and
$\rho=1$).  Since the HAMA library assumes a very different input
representation and our experiments show that the aforementioned 2D
approach is slower than the 3D approach, we do not take into account
HAMA in this paper. Moreover, it should be said that the algorithms
in~\cite{Seo10} break the functional approach of MapReduce by allowing
concurrent accesses to the distributed file system within each map and
reduce function, and that the experiments have been carried out on
small inputs (input side $\sqrt{n} \le$ 5000, while we run our
algorithms with input side $\sqrt{n} \ge$ 16000).

\paragraph{Paper organization.} 
Section~\ref{sec:preliminaries} introduces the matrix multiplication
problem, the MapReduce framework and the experimental settings.
Section~\ref{sec:algorithms} describes and analyzes the proposed
MapReduce algorithms.  Section~\ref{sec:implementation} shows some
additional technicalities required by the Hadoop implementation.
Section~\ref{sec:experiments} provides and explains the experimental
results.  Finally, Section~\ref{sec:conclusion} gives some final
remarks.


\section{Preliminaries}\label{sec:preliminaries}
\paragraph{Matrix multiplication.}\label{sec:mmult}
The focus of the paper is matrix multiplication in a general semiring,
that is we are ruling out Strassen-like algorithms.  For the sake of
simplicity, we focus on square matrices of size $\sn\times\sn$.  We
denote with $A$ and $B$ the two input matrices and with $C=A\cdot B$
the output matrix.
In the sparse case, we denote with $0\leq \delta\leq 1$ the density of
non-zero entries in a given matrix of size $\sn\times\sn$, that is the
number of non-zero entries is $\delta n$.  A random Erd\"{o}s-R\'enyi
matrix of size $\sn\times\sn$ and parameter $0\leq \delta\leq 1$ is a
matrix where each entry is non-zero with probability $\delta$
independently; the expected density is clearly $\delta$.  When
$\delta<<1/n^{1/4}$, the expected density of the product of two
Erd\"{o}s-R\'enyi matrices is $\delta^2 \sn$~\cite{Ballard13}.

\paragraph{MapReduce and Hadoop.}
A MapReduce algorithm consists of a sequence of \emph{rounds}.  The
input/output of a round is a multiset of key-value pairs $\l k; v \r$,
where $k$ and $v$ denote the key and the value respectively.  The
input of a round may contain the output of previous rounds but also
other pairs.  A round is organized in three steps:
\begin{enumerate}[itemsep=-1ex,topsep=-2ex]
\item \emph{Map step}: each input pair is individually given in input
  to a \emph{map function}, which returns a new multiset of pairs.
  The pairs returned by all applications of the map function are
  called \emph{intermediate pairs}.
\item \emph{Shuffle step}: the intermediate pairs are grouped by key.
\item \emph{Reduce step}: each group of pairs associated with the same
  key is individually given to a \emph{reduce function}, which returns
  a new multiset of pairs.  The pairs returned by all applications of
  the reduce function are the
  \emph{output pairs} of the round.\\
\end{enumerate}
For simplicity we refer to a single application of the map (resp.,
reduce) function with \emph{mapper} (resp., \emph{reducer}).  For a
MapReduce algorithm, we denote with \emph{round number} the number of
rounds, with \emph{shuffle size} the maximum number of intermediate
pairs in each round, and with \emph{reducer size} the maximum amount
of memory words required by each reducer.\footnote{The memory
  requirements of the map functions of our algorithms are very small
  constants, and therefore they are ignored.}

Hadoop\cite{Hadoop12}
is the most used open source implementation of the MapReduce
paradigm. A MapReduce algorithm is executed by Hadoop on $p$
processors roughly as follows.  In each round, each processor is
associated with a fixed number of \emph{map tasks} and \emph{reduce
  tasks}.  In the map step, the runtime system evenly distributes the
input pairs to the map tasks and then each map task applies the map
function to each single pair.  In the reduce step, all groups of
same-key pairs are randomly assigned to each reduce task and then each
reduce task applies the reduce function to each single group.  The
input/output pairs are stored in the distributed file system
HDFS~\cite{Shvachko10}, while intermediate pairs are temporarily
stored on the disk of each machine.  Hadoop allows to personalize
different settings of the framework, in particular it allows to
specify how groups are distributed among reduce tasks by defining a
\emph{partitioner}. A partitioner receives as input a key $k$ and the
total number $T$ of reduce tasks, and it returns a value in $[0,T)$
uniquely denoting the reduce task that will apply the reduce function
to the group associated with key $k$.  

We recall that there exist other efficient implementations of the
MapReduce paradigm, including Spark~\cite{Zaharia10},
MapReduce-MPI\cite{Plimpton11}, Sphere~\cite{GrossmanKDD2008}. Our
algorithms can be executed on these frameworks with minor programming
changes.

\paragraph{Experimental Framework.}
The in-house cluster consists of 16 nodes, each equipped with a 4-core
Intel i7 processor Nehalem @ 3.07GHz, 12 GB of memory, 6 disks of 1TB
and 7200 RPM in RAID0.  The interconnection system is a 10 gigabit
Ethernet network.  The operating system is Debian/Squeeze with kernel
2.6.32.  All experiments have been executed with Hadoop 2.4.0 on 16
nodes (1 master + 16 slaves; a node was both master and slave).

The experiments on Amazon Web Services have been executed on instances
of type \emph{c3.8xlarge} and \emph{i2.xlarge} in the US East
Region.\footnote{{\tt http://aws.amazon.com/ec2/instance-types}.}  The
c3.8xlarge is a compute optimized instance featuring 32 virtual cores
on a physical processor Intel Xeon E5-2680 @2.8GHz, 64 GB of memory, 2
solid state disks of 320 GB; instances are connected by a 10 gigabit
network.  The i2.xlarge is a storage optimized instance featuring 4
virtual cores on a physical processor Intel Xeon E5-2670 @2.5GHz, 32
GB of memory, 1 solid state disk of 800 GB optimized for very high
random I/O performance; instances are connected by a
moderate-performance network.  All experiments have been run on
clusters with 9 (1 master + 8 slave) instances of the same type, with
a Debian/Squeeze operating system and Hadoop 2.2.0.


\section{Algorithms}\label{sec:algorithms}
The \m library contains two MapReduce algorithms for
performing matrix multiplication which are based on the 3D approach
proposed in~\cite{PietracaprinaPRSU12}.  In this section, we first
propose the algorithm for multiplying two square dense matrices. Then, 
we give the algorithm that multiplies two sparse
matrices; although it
applies to any sparse input, we analyze it in the case of random
Erd\"{o}s-R\'enyi matrices.  The algorithms exhibit tradeoff among
round number, shuffle size and reducer size, and rely on the well-know
3D decomposition of the lattice representing the $n^{3/2}$ elementary
products (some of which are zeros in the sparse case) into cubes of a
given size.  In other words, the problem is decomposed into
multiplications of smaller square matrices that are solved
sequentially by the reducers. The subproblems are evenly distributed
among the rounds in order to guarantee the specified tradeoff.

For the sake of comparison, we also provide in Section~\ref{sec:2D} an
algorithm for dense matrix multiplication which is based on the 2D
decomposition of the lattice: the problem is decomposed into the
multiplication of smaller rectangular matrices where the longest side
is $\sn$.  This approach is quite common in naive implementations of
matrix multiplication in MapReduce available on the web: however, as
we will see, this approach is inefficient.  The algorithm exhibits a
tradeoff among round number, shuffle size, and reducer size. This
result does not appear in~\cite{PietracaprinaPRSU12}.

The tradeoffs in our algorithms are highlighted by expressing the
round number as a function of two parameters: the \emph{replication
  factor} $\rho$ and the \emph{subproblem size} $m$.  The replication
factor gives an estimate of the desired volume of intermediate data in
each round, while the subproblem size bounds the memory requirements
of each reducer: indeed, our algorithms guarantee that the shuffle
size is $\BT{\rho}$ times the input size and the reducer size is
$\BT{m}$.

\subsection{3D Algorithm  for Dense Matrix
Multiplication.}\label{sec:3DdenseXXXX}

\begin{algorithm2e}[t]
  \caption{Map and reduce functions of the $r$-th round of the 3D
    dense algorithm, with $0\leq r < \sqrt{n}/(\rho\sm)+1$.}
  \label{algo:map3Ddense}
  
  \DontPrintSemicolon
  \Map{input $\pair{(i,\ell,j); D}$;
    $D$ is $A_{i,j}$ or $B_{i,j}$ if $\ell=-1$,\\$~\qquad$ or
      $C^\ell_{i,j}$ otherwise;}{
    \Switch{$D$}{
      \Case{$D$ is $A_{i.j}$}{
        \For{\Let{$\ell$}{0} \To $\rho -1$}{
          \Let{$h$}{$j - i - \ell$}\;
          \Emit $\pair{(i,j,h); D}$\;
        }
      }
      \Case{$D$ is $B_{i.j}$}{
        \For{\Let{$\ell$}{0} \To $\rho -1$}{
          \Let{$h$}{$i - j - \ell$}\;
          \Emit $\pair{(h,i,j); D}$\;
        }
      }
      \Case{$D$ is $C^\ell_{i.j}$}{
        \If{$r$ is the last round}{
          \Emit $\l (i,-1,j); D \r$\;
        }
        \Else{
          \Let{$h$}{$i + j + \ell + r\rho$}\;
          \Emit $\l (i,h,j); D \r$\;
        }
      }
    }
  }
  \;
  \Reduce{input $\l (i,h,j); \{A_{i,h}, B_{h,j}, C_{i,j}^\ell\}\r$ if
    $0\leq i,j < \snm$ and $h=(i+j+\ell+r\rho)\mod \snm$, or $\l
      (i,-1,j); \{C_{i,j}^0, \ldots C_{i,j}^{\rho-1}\}\r$}{
    \If{$r$ is the last round}{
      \Emit $\l (i,j); \sum_{\ell=0}^{\rho - 1} C_{i,j}^\ell \r$\;
    }
    \Else{
      \Emit $\l (i, \ell, j); C^\ell_{i,j} + A_{i,h}B_{h,j} \r$\;
    }
  }
\end{algorithm2e}

The 3D MapReduce algorithm for dense matrix multiplication was
initially proposed in~\cite{PietracaprinaPRSU12}.  The paper provides
an high level description of the algorithm in the MR computational
model, however it does not provide important details such as a
description of the map and reduce functions that are required for
actually implementing the algorithm in MapReduce.  In this section, we
close this gap by providing a detailed description of the algorithm,
including a pseudocode for the map and reduce functions in
Algorithm~\ref{algo:map3Ddense}.

For any $1\leq m\leq n$ and $1\leq \rho\leq \snm$, the algorithm
requires $R=\sqrt{n}/(\rho\sm)+1$ rounds, the shuffle size is $3\rho
n$, and the reducer size is $3m$. For the sake of simplicity, we
assume that $\sn$ and $\sm$ are integers and that $\sm$ divides $\sn$.
The input matrices $A$, $B$ and the output matrix $C$ are divided into
submatrices of size $\sm\times \sm$, and we denote these submatrices
with $A_{i,j}$, $B_{i,j}$ and $C_{i,j}$, respectively, for $0\leq i,j
< \sqrt{n/m}$.  Clearly, we have $C_{i,j}=\sum_{h=0}^{\sqrt{n/m}-1}
A_{i,h}\cdot B_{h,j}$.  The input matrix $A$ is stored as a collection
of pairs $\l (i,-1,j); A_{i,j} \r$, where $-1$ is used as dummy
value. Matrices $B$ and $C$ are stored similarly.  The division
of the input matrices implies $(n/m)^{3/2}$ products between
submatrices, which are partitioned into $\snm$ groups as follows:
group $G_\ell$, with $0\leq \ell < \snm$, contains the products
$A_{i,h}\cdot B_{h,j}$ for $h=(i+j+\ell)\mod \snm$ and for every
$0\leq i,j <\snm$. We note that each submatrix of $A$ and $B$ appears
exactly once in each group.

The algorithm works in $R$ rounds. In the $r$-th round, with $0\leq r
< R-1$, the algorithm computes all products in $G_\ell$ and the sum
$C_{i,j}^\ell = \sum_{k=0}^{r-1} A_{i, i+j + \ell + k\rho} B_{i+j +
  \ell + k\rho, j}$ for every $r \rho \leq \ell < (r+1)\rho$ and
$0\leq i,j<\snm$; each product $A_{i,h}\cdot B_{h,j}$ is computed
within the reducer associated with key $(i,h,j)$. Eventually, in the
last round $r=R-1$, the matrix $C$ is created by computing $C_{i,j} =
\sum_{\ell=0}^{\rho -1} C_{i,j}^\ell$ for every $0\leq i,j <\snm$.  A
more detailed explanation follows.  At the beginning of the $r$-th
round, the input pairs are $\l (i,-1,j); A_{i,j} \r$, $\l (i,-1,j);
B_{i,j} \r$.  Starting from the second round (i.e., $r\geq 1$) the
algorithm also receives as input the pairs $\l (i,\ell,j);
C^\ell_{i,j} \r$, for every $0\leq i,j < \snm$ and $0\leq \ell <
\rho$, that have been outputted by the previous round.  In the map
step, each input pair is replicated $\rho$ times since it is required
in $\rho$ reducers.  Specifically, for each pair $\l (i,-1,j); A_{i,j}
\r$, the map function emits the pairs $\l (i,j,h); A_{i,j} \r$ with
$h=j-i-\ell \mod \snm$ for every $0\leq \ell < \rho$.  Similarly, for
each pair $\l (i,-1,j); B_{i,j} \r$, the map function emits the pairs
$\l (h,i,j); B_{i,j} \r$ with $h=i-j-\ell \mod \snm$ for every $0\leq
\ell < \rho$.  On the other hand, the map function emits for each pair
$\l (i,\ell,j); C^\ell_{i,j} \r$ the pair $\l (i,h,j); C^\ell_{i,j}
\r$ with $h = i+j + \ell +r\rho \mod \snm$.  Finally, each reducer
associated with key $(i,h,j)$ with $h=i+j+\ell$, for any $0\leq i,j <
\snm$ and $0\leq \ell < \rho$, receives as input $A_{i,h}$, $B_{h,j}$
and $C^\ell_{i,j}$, computes
$C^\ell_{i,j}=C^\ell_{i,j}+A_{i,h}B_{h,j}$ and emits $\l(i,\ell,j),
C_{i,j}^\ell\r$. 

\begin{theorem}
\label{thm:3d}
  The above algorithm requires $R=\sqrt{n}/(\rho\sm)+1$ rounds, the
  shuffle size is $3\rho n$, and the reducer size is $3m$.
\end{theorem}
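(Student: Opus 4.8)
The plan is to verify each of the three claims — round number, shuffle size, and reducer size — by directly reading off the quantities from the algorithm's description, since the statement is essentially a bookkeeping summary of the construction just presented. Each claim corresponds to counting a specific resource, so I would organize the proof as three short independent arguments.

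First, for the **round number**, I would argue that the algorithm must process all $\snm$ groups $G_\ell$, and that each round handles exactly $\rho$ consecutive groups (those with $r\rho \leq \ell < (r+1)\rho$). Hence the computation rounds number $\snm / \rho = \sqrt{n}/(\rho\sm)$, and adding the final aggregation round that computes $C_{i,j} = \sum_{\ell=0}^{\rho-1} C_{i,j}^\ell$ yields $R = \sqrt{n}/(\rho\sm) + 1$. This is immediate from the round structure already laid out.

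Second, for the **shuffle size**, I would count the intermediate pairs emitted by the map functions in a generic round. Each of the three matrices $A$, $B$, $C$ is stored as $(\snm)^2 = n/m$ submatrices of size $\sm \times \sm$. Since each submatrix occupies $m$ memory words and the map function replicates each input pair $\rho$ times (once for each of the $\rho$ reducers that needs it), the number of words shuffled for each matrix is $\rho \cdot (n/m) \cdot m = \rho n$. Summing over the three matrices gives a shuffle size of $3\rho n$, as claimed. The only subtlety here is confirming that, measured in words, each of $A$, $B$, and the $C^\ell$ pairs contributes the same $\rho n$, which follows because all submatrices have identical size $m$.

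Third, for the **reducer size**, I would observe that each reducer associated with a key $(i,h,j)$ receives exactly three submatrices — $A_{i,h}$, $B_{h,j}$, and $C^\ell_{i,j}$ — each of size $m$, so it stores $3m$ words; the final-round reducers, which sum $\rho$ partial results $C_{i,j}^\ell$, must be handled too, but these do not exceed the stated bound under the constraint $\rho \leq \snm$. The **main obstacle** I anticipate is precisely this reducer-size accounting in the last round: there a reducer receives $\rho$ submatrices $C_{i,j}^0, \ldots, C_{i,j}^{\rho-1}$ rather than three, so I would need to verify that $\rho m$ does not asymptotically dominate $3m$, i.e., that the reducer-size bound $\BT{m}$ is preserved (treating the final-round summation as either streamed or absorbed into the constant), and to make explicit that the dummy-key pairs and the modular index arithmetic never cause two distinct submatrices to collide at a single reducer.
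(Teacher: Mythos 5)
Your three counting arguments follow the same route as the paper's own proof, and the arithmetic is right; but the one step the paper actually proves in detail is precisely the step you defer. The paper's proof spends its first paragraph verifying that the modular index arithmetic routes the right submatrices to the right reducer in the right round: it fixes a product $A_{i,k}\cdot B_{k,j}$, observes it must be computed with $\ell=(k-i-j)\bmod\rho$ in round $r=(k-i-j-\ell)/\rho$, and checks that the mapper holding $A_{i,k}$ emits the key $(i,k,k-i-\ell-r\rho)$, whose third coordinate equals $j$ exactly for these values of $\ell$ and $r$ (with analogous checks for $B_{h,j}$ and $C^\ell_{i,j}$). This is not a formality: the claimed reducer size $3m$ holds only if each key $(i,h,j)$ collects exactly one submatrix of $A$, one of $B$, and one partial sum $C^\ell_{i,j}$ --- no collisions, no omissions --- and the paper states explicitly that this pair-distribution argument is what is missing from the prior work it builds on. Your proposal asserts the ``exactly three submatrices'' property and relegates the routing check to a closing remark, so as written it is incomplete at the theorem's only nontrivial point.

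Two smaller observations. First, your worry about the final round is well placed, and there you are in fact more careful than the paper: a last-round reducer receives $\rho$ submatrices $C^0_{i,j},\dots,C^{\rho-1}_{i,j}$, i.e.\ $\rho m$ words if materialized, which for $\rho$ near $\snm$ far exceeds $3m$; since $\rho$ is a parameter and not a constant, ``absorbing it into the constant'' is not an option, and the bound survives only because the sum can be accumulated one term at a time from the reducer's input iterator (about $2m$ words). The paper's proof silently ignores this round. Second, your shuffle-size mechanism for $C$ is slightly off: the $C$-pairs are not replicated $\rho$ times by the map function (each $C^\ell_{i,j}$ is emitted to a single key); rather, there are $\rho n/m$ such pairs, one per triple $(i,\ell,j)$, each of $m$ words. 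The total of $\rho n$ words for the $C$-part, hence $3\rho n$ overall, is the same, but the count comes from the number of partial sums, not from replication.
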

\begin{proof}
  The correctness of the algorithm is proved
  in~\cite{PietracaprinaPRSU12}.  However, we have to show that each
  reducer receives the correct set of submatrices since pair
  distribution is not taken into account in the original paper.  Each
  product $A_{i,k}\cdot B_{k,j}$ is computed when $\ell=(k-i-j)\mod
  \rho$ and in round $r= (k-i-j-\ell)/\rho$.  In this round, the
  mapper associated with pair $\l(i,k); A_{i,k}\r$ emits the pair
  $\l(i,k, k-i-\ell -r\rho); A_{i,k}\r$ where $k-i-\ell -r\rho=j$ for
  the above values of $\ell$ and $r$.  A similarly argument applies to
  $B_{h,j}$ and $C^\ell_{i,j}$.

  Since each submatrix of $A$ and $B$ is replicated $\rho$ times and
  there are $\rho n/m$ matrices $C_{i,j}^\ell$, the shuffle size is
  $3\rho n$.  Each reducer requires at most $3m$ memory words for
  computing $C_{i,j}^\ell=C_{i,j}^\ell+A_{i,i+j+\ell+r\rho}\cdot
  B_{i+j+\ell+r\rho,j}$.
\end{proof}

We observe that there are $3n \snm$ pairs in each shuffle in a
two-round algorithm for matrix multiplication (i.e, $\rho=\snm$). As
shown in~\cite{PietracaprinaPRSU12}, this is the best we can get if
only semiring operations are allowed.  Finally, we note that the
algorithm guarantees that almost all mappers perform the same amount
of work because each submatrix of $A$ and $B$ is replicated $\rho$
times.

\subsection{3D Algorithm  for Sparse Matrix
Multiplication.}\label{sec:3DsparseXXX}
\newcommand{\smp}{\sqrt{m'}} An algorithm for sparse matrix
multiplication easily follows from the previous 3D algorithm by
exploiting the sparsity of the input and output matrices.  We assume
that the input matrices are random Erd\"{o}s-R\'enyi matrices with
size $\sn\times \sn$ and expected density $\delta<<n^{1/4}$; we will
later see how to extend the result to the general case.  We recall
that the output density of the product of two random Erd\"{o}s-R\'enyi
matrices is $\delta_O=\delta^2\sn$~\cite{Ballard13}.

The input matrices $A$ and $B$ and the output matrix $C$ are
partitioned into blocks of size $\smp\times \smp$, where
$m'=m/\delta_O=m/(\delta^2 \sn)$.  Each submatrix is represented as a
list of non-zero entries, although other formats can be used like the
Compressed Row Storage.  The sparse algorithm follows by applying the previous
3D dense algorithm to the submatrices of
size $\smp\times \smp$.  The sparse algorithm exploits the fact that
each submatrix $A_{i,j}$ or $B_{i,j}$ (resp., $C_{i,j}$) is a random
Erd\"{o}s-R\'enyi matrix with expected density $\delta$ (resp.,
$\delta_O)$. Then the expected space required for computing
$A_{i,h}\cdot B_{h,j}$ is $(2\delta + \delta^2 \sn) m' \sim 3m$, and
thus the reducer size is $3m$ even for this algorithm.  The pseudocode
can be easily derived from the pseudocode for the dense case, proposed in
Algorithm~\ref{algo:map3Ddense}, by replacing $m$ with $m'$. 


\begin{theorem}
  The above algorithm requires $R=\delta n^{3/4}/(\rho \sm)+1$ rounds,
  the expected shuffle size is $3\rho \delta^2 n^{3/2}$, and the
  expected reducer size is $3m$.
\end{theorem}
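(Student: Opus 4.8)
The plan is to derive this sparse theorem directly from the dense Theorem~\ref{thm:3d} by substituting the block size parameter $m'=m/\delta_O = m/(\delta^2\sn)$ in place of $m$, since the sparse algorithm is literally the dense algorithm run on $\smp\times\smp$ blocks. First I would recall from the preliminaries that the dense 3D algorithm applied to blocks of nominal side $\smp$ partitions a matrix of side $\sn$ into a grid of $\snm'\times\snm'$ submatrices, so the number of rounds is $R=\sn/(\rho\smp)+1$. Plugging in $m'=m/(\delta^2\sn)$ gives $\smp=\sqrt{m}/(\delta\, n^{1/4})$, hence $\sn/(\rho\smp)=\delta\,n^{1/4}\sn/(\rho\sqrt m)=\delta\, n^{3/4}/(\rho\sm)$, which yields the claimed round count $R=\delta n^{3/4}/(\rho\sm)+1$. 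This step is a routine algebraic substitution and should present no difficulty.

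Next I would handle the expected shuffle size. In the dense analysis the shuffle size is $3\rho$ times the input size, where the input size is counted in submatrix-entries; the subtlety in the sparse case is that the relevant quantity is the expected number of \emph{non-zero} entries actually transmitted rather than the nominal $n$ dense entries. Each of $A$ and $B$ is replicated $\rho$ times, and each has expected density $\delta$, so each contributes an expected $\rho\,\delta n$ non-zero pairs; the $C^\ell_{i,j}$ blocks carry the partial products, whose expected density is the output density $\delta_O=\delta^2\sn$, and there are $\rho\, n/m'$ such blocks of $m'$ entries each, giving $\rho\,\delta_O\, (n/m')\,m' = \rho\,\delta^2\sn\cdot n$ wait---I must be careful here, since $(n/m')m'=n$, this again reduces to $\rho\,\delta^2 n^{3/2}$. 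Summing the dominant contributions and using $\delta\ll n^{1/4}$ (so the $C$ term $\delta^2 n^{3/2}$ dominates the input term $\delta n$), the expected shuffle size is $\BT{\rho\,\delta^2 n^{3/2}}$, matching the stated $3\rho\,\delta^2 n^{3/2}$. The bookkeeping of which term dominates is the one place where the density assumption $\delta\ll n^{1/4}$ must be invoked explicitly.

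For the expected reducer size I would simply reuse the computation already spelled out in the preamble to the theorem: a reducer holds one block of $A$, one of $B$, and one partial-product block $C^\ell_{i,j}$, occupying in expectation $(2\delta + \delta_O)\,m' = (2\delta + \delta^2\sn)\,m'$ words, and substituting $m'=m/(\delta^2\sn)$ makes the $C$-block term equal to $m$ while the two input-block terms are lower order, giving an expected reducer size of $\sim 3m$. The main obstacle---really the only nontrivial point---is to argue that these expectations concentrate sufficiently, i.e.\ that the non-zero counts in individual blocks do not deviate wildly from their means; since each block is itself an Erd\H{o}s--R\'enyi matrix with $\Theta(m')$ expected non-zeros, a standard Chernoff bound on sums of independent indicator variables suffices, and I would remark that the bounds hold in expectation (as the theorem states) with high-probability refinements available if needed. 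The remaining correctness of the reducer-to-key assignment carries over unchanged from the dense case and needs no re-proof.
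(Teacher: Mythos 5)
Your overall route is the same as the paper's: the paper's entire proof is the single sentence that the claim ``easily follows from Theorem~\ref{thm:3d} and by the sparsity of the input/output matrices,'' and your proposal is exactly that derivation carried out explicitly. Your round-count computation (substituting $m'=m/(\delta^2\sqrt{n})$ into $\sqrt{n}/(\rho\sqrt{m'})+1$) is correct, and your decomposition of the shuffle into the $A$, $B$, and $C^\ell_{i,j}$ contributions ($\rho\delta n$, $\rho\delta n$, and $\rho\delta^2 n^{3/2}$, respectively) is the right accounting.

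However, one justification is a non-sequitur. You invoke the density assumption (written $\delta\ll n^{1/4}$ in Section~\ref{sec:3DsparseXXX}, though the preliminaries make clear the intended condition is $\delta\ll 1/n^{1/4}$) to conclude that the $C$ term $\rho\delta^2 n^{3/2}$ dominates the input terms $\rho\delta n$. That assumption is an \emph{upper} bound on $\delta$ and cannot yield this conclusion: the domination $\delta^2 n^{3/2}\ge \delta n$ is equivalent to the \emph{lower} bound $\delta\ge 1/\sqrt{n}$, i.e., to the output being at least as dense as the input ($\delta_O\ge\delta$, at least one expected non-zero per row/column). For instance, $\delta=n^{-3/4}$ satisfies the stated hypothesis, yet then the input terms $2\rho\delta n = 2\rho n^{1/4}$ swamp $3\rho\delta^2 n^{3/2}=3\rho$, and the claimed shuffle bound fails. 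The same implicit assumption $\delta\sqrt{n}\ge 1$ is what makes the reducer bound work: the two input blocks contribute $2\delta m' = 2m/(\delta\sqrt{n})$, which is at most $2m$ exactly when $\delta\sqrt{n}\ge 1$. Note also an internal inconsistency in your reducer argument: if the input-block terms were genuinely ``lower order,'' as you assert, the reducer size would be $\sim m$, not $\sim 3m$; the $3m$ comes from bounding each of the three blocks separately by $m$ in the regime $\delta_O\ge\delta$. To be fair, the paper glosses over this point as well (both in its one-line proof and in its preamble estimate $(2\delta+\delta^2\sqrt{n})m'\sim 3m$), but since you chose to make the argument explicit, the condition you invoke must be the one that actually does the work. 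Your closing remark on concentration via Chernoff bounds is fine but unnecessary, since the theorem claims only expectations.
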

\begin{proof}
  The claim easily follows from Theorem~\ref{thm:3d} and by the
  sparsity of the input/output matrices.
\end{proof}

Suppose that $A$ and $B$ are general sparse matrices with density
$\delta<<1$ and that an approximation of the density size of the
output matrix is known $\tilde \delta_O$. Let $\delta_M=\max\{\delta,
\tilde \delta_0\}$, then it is sufficient to split each input/output
matrix into blocks of size $\smp\times \smp$ with $m'=m/\delta_M$ and
apply the previous sparse algorithm. For improving the load balancing
among reducers, columns and rows of the input matrices should be
randomly permuted~\cite{PietracaprinaPRSU12}.  In this case it is easy
to show that the algorithm requires $R=\sqrt{\delta_M n} /(\rho\sm)+1$
rounds, the expected shuffle size is $3\rho \delta_M n$, and the
expected reducer size is $3m$.  A good approximation of the output
matrices can be computed with a scan of the input matrices (see,
e.g.,~\cite{PaghS14}).  Finally, we observe that if the output size is
not known, then it suffices to set $m'=m/\delta$ but the bounds on
shuffle and reducer sizes do not apply anymore.

\subsection{2D Algorithm for Dense Matrix Multiplication}
\label{sec:2D}

The 2D MapReduce algorithm divides the input matrix $A$ into $n/m$
submatrices of size $m/\sn \times \sn$, the input matrix $B$ into
$n/m$ submatrices of size $\sn\times m/\sn$, and the output matrix
into $(n/m)^2$ submatrices of size $m/\sn \times m/\sn$.  We label
each submatrix of $A$ and $B$ with $A_i$ and $B_i$ and each submatrix
of $C$ with $C_{i,j}$, for any $0\leq i,j < n/m$.  We clearly have
$C_{i,j} = A_i\cdot B_j$.  The input matrix $A$ is stored as a
collection of pairs $\l (i,-1); A_{i} \r$, for any $0\leq i< n/m$,
where $-1$ is used as dummy value; similarly for matrix $B$.  The
output matrix is stored as a collection of pairs $\l (i,j); C_{i,j}
\r$, for any $0\leq i,j< n/m$.

For any $\sn\leq m\leq n$ and $1\leq \rho\leq n/m$, the algorithm
requires $R=n/(\rho m)$ rounds, the shuffle size is $2\rho n$, and the
reducer size is $3m$.  For the sake of simplicity, we assume that
$\sn$ is an integer and that $\sn$ divides $m$.  In the $r$-th round,
with $0\leq r < R$, the algorithm computes the products $C_{i,j} =
A_i\cdot B_j$ with $j=i+\ell +r\rho \mod (n/m)$, for any $0\leq i<
n/m$ and $0\leq \ell < \rho$.  Specifically, at the beginning of the
$r$-th round, the input pairs are $\l (i,-1); A_{i} \r$, $\l (-1,j);
B_{j} \r$.  Then, the map function emits for each pair $\l (i,-1);
A_{i} \r$ the pairs $\l (i,j); A_{i} \r$ with $j=(i+\ell +r\rho ) \mod
(n/m)$ for any $0\leq \ell < \rho$; similarly, for each pair $\l
(-1,j); B_{j} \r$, the mapper emits the pairs $\l (i,j); B_{j} \r$
with $i=(j-\ell -r\rho ) \mod (n/m)$ for any $0\leq \ell < \rho$.
Then, each reducer associated with key $(i,j)$ receives $A_i$ and
$B_j$ and emits the pair $\l(i,j); A_i\cdot B_j\r$. The pseudocode of
the map and reduce functions is in Algorithm~\ref{algo:map2Ddense}.

\begin{algorithm2e}[t]
  \caption{Map and reduce functions of the $r$-th round of the 2D
    algorithm, with $0\leq r < n/(\rho m)$.}
  \label{algo:map2Ddense}

  \DontPrintSemicolon
  \Map{input $\pair{(i,j); D}$; $D$ is $A_i$ if $j=-1$ or $B_j$ if $i=-1$;}{
    \Switch{$D$}{
      \Case{$D$ is $A_i$}{
        \For{\Let{$l$}{$0$} \To $\rho-1$}{
          \Let{$j$}{$(i + l + r\rho) \mod (n/m)$}\;
          \Emit $\pair{(i,j); D}$
        }
      }
      \Case{$D$ is $B_i$}{
        \For{\Let{$l$}{$0$} \To $\rho-1$}{
          \Let{$i$}{$(j - l - r\rho) \mod (n/m)$}\;
          \Emit $\pair{(i,j); D}$\;
        }
      }
    }
  }

  \;
    
  \Reduce{input $\pair{(i,j); \{A_i, B_j\}}$ for $0 \le i,j < n/m$}{
    \Emit $\pair{(i,j); A_i \cdot B_j}$\;
  }

\end{algorithm2e}

\begin{theorem}
  The above algorithm requires $R=n/(\rho m)$ rounds, the shuffle size
  is $2\rho n$, and the reducer size is $3m$.
\end{theorem}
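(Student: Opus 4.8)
The plan is to mirror the proof of Theorem~\ref{thm:3d}: I would first establish correctness---namely that every output submatrix $C_{i,j}$ is computed exactly once across the $R$ rounds, and that the reducer for key $(i,j)$ receives precisely $\{A_i,B_j\}$---and then read off the three quantities directly from the map and reduce functions of Algorithm~\ref{algo:map2Ddense}. An important simplification over the 3D case is that no partial results are carried between rounds: here each $C_{i,j}=A_i\cdot B_j$ is a single product emitted by one reducer, so the rounds output disjoint sets of final pairs and correctness reduces to a clean counting statement.

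For correctness, the step I would isolate is the following coverage argument. Fix $i$ with $0\leq i<n/m$. In round $r$ the algorithm computes the products with $j=(i+\ell+r\rho)\bmod(n/m)$ for $0\leq\ell<\rho$. As $(\ell,r)$ ranges over $[0,\rho)\times[0,R)$, the offset $\ell+r\rho$ is just the quotient--remainder decomposition of the integers in $[0,\rho R)$, so it hits each value once; and since $\rho R=\rho\cdot n/(\rho m)=n/m$, the offset sweeps $[0,n/m)$ bijectively, hence so does $j$. Thus each $C_{i,j}$ is produced exactly once. That the right operands meet is immediate: the $A$-branch sends $A_i$ to key $(i,j)$ via $j=(i+\ell+r\rho)\bmod(n/m)$, the $B$-branch sends $B_j$ via $i=(j-\ell-r\rho)\bmod(n/m)$, and these relations coincide, so the reducer at $(i,j)$ sees exactly $\{A_i,B_j\}$.

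The three bounds would then follow by direct counting. For the round number I would note that there are $(n/m)^2$ output submatrices and each round computes $\rho\cdot(n/m)$ of them (the $\rho$ choices of $\ell$ times the $n/m$ choices of $i$), all distinct by the bijection, giving $R=(n/m)^2/(\rho\,n/m)=n/(\rho m)$. For the shuffle size I would count words rather than pairs, as in Theorem~\ref{thm:3d}: each of the $n/m$ submatrices $A_i$ holds $(m/\sn)\cdot\sn=m$ words and is replicated $\rho$ times, for $\rho n$ words, and $B$ contributes another $\rho n$; nothing else is shuffled, so the total is $2\rho n$. For the reducer size, the reducer at $(i,j)$ stores $A_i$ and $B_j$ ($m$ words each) together with the product $C_{i,j}$ of size $(m/\sn)^2=m^2/n\leq m$, yielding at most $3m$.

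I do not expect a genuine obstacle here: the only substantive point is the exactly-once coverage, and even that is elementary. What I would be careful to check is that the parameter hypotheses are used exactly where needed---$\rho\leq n/m$ is what keeps the offsets within a single period $[0,n/m)$ so that products are distinct within a round, and $m\leq n$ is what gives the $m^2/n\leq m$ bound for the reducer size.
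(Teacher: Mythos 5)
Your proof is correct and follows essentially the same route as the paper's: the paper establishes the same bijection in the inverse direction (given subproblem $(i,j)$, it exhibits $\ell=(j-i)\bmod\rho$ and round $r=(j-i-\ell)/\rho$, then checks the mapper emits $A_i$ and $B_j$ to key $(i,j)$), and then counts $\rho$ copies of each submatrix for the shuffle size and at most $3m$ words per reducer. Your forward exactly-once coverage argument and explicit word counts ($m$ words each for $A_i$ and $B_j$, plus $m^2/n\leq m$ for $C_{i,j}$) are just slightly more detailed renderings of the same steps.
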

\begin{proof}
  Subproblem $(i,j)$ is executed with $\ell=(j-i)\mod \rho$ in round
  $r=(j-i-\ell)/\rho$.  The reducer associated with key $(i,j)$
  correctly receives matrices $A_i$ and $B_j$ in round $r$ since the
  mapper with input $\l(i,j); A_i\r$ emits the pair
  $\l(i,i+\ell+r\rho); A_i\r$ where $i+\ell+r\rho=j$ for the above
  values of $\ell$ and $r$ (a similar argument applies for $B_j$).
  The shuffle size is $2\rho n$ since there are at most $\rho$ copies
  of each submatrix $A_i$ and $B_j$. Each reducer just requires at
  most $3m$ memory words for computing $C_{i,j}=A_i\cdot B_j$.
\end{proof}

Our algorithm guarantees that each mapper performs the same amount of
work: indeed, each submatrix of $A$ and $B$ is replicated exactly
$\rho$ times.  We observe that a naive way to distribute subproblems
among rounds may significantly unbalance the work performed by
mappers: in the worst case, there can be one submatrix of $A$
replicated $n/m$ times, while the remaining submatrices of $A$ are
replicated once (similarly for $B$); then one mapper requires
$\BO{n/m}$ work, while all the remaining mappers perform $\BO{1}$
work.


\section{Implementation}\label{sec:implementation}

The algorithms described in previous sections are implemented as
Hadoop jobs, one for each round. Each job comprises a single map
and a single reduce function.
Matrices are represented as 
{\tt SequenceFile}s where keys are triplets or pairs (for 3D and 2D
multiplication, respectively) and values are serialized objects
representing blocks. The representation used for blocks will vary,
depending on the matrix being dense or sparse.
For dense matrices, each serialized block is the sequence of elements
of the matrix in row-major order. As for sparse matrices, only
non-zero elements are serialized along with their indexes.
The entries of the matrices are doubles. Further details on the Hadoop
configuration are provided in Section~\ref{app:conf}.

\begin{figure}[t]
\centerline{
\includegraphics[width=.5\textwidth]{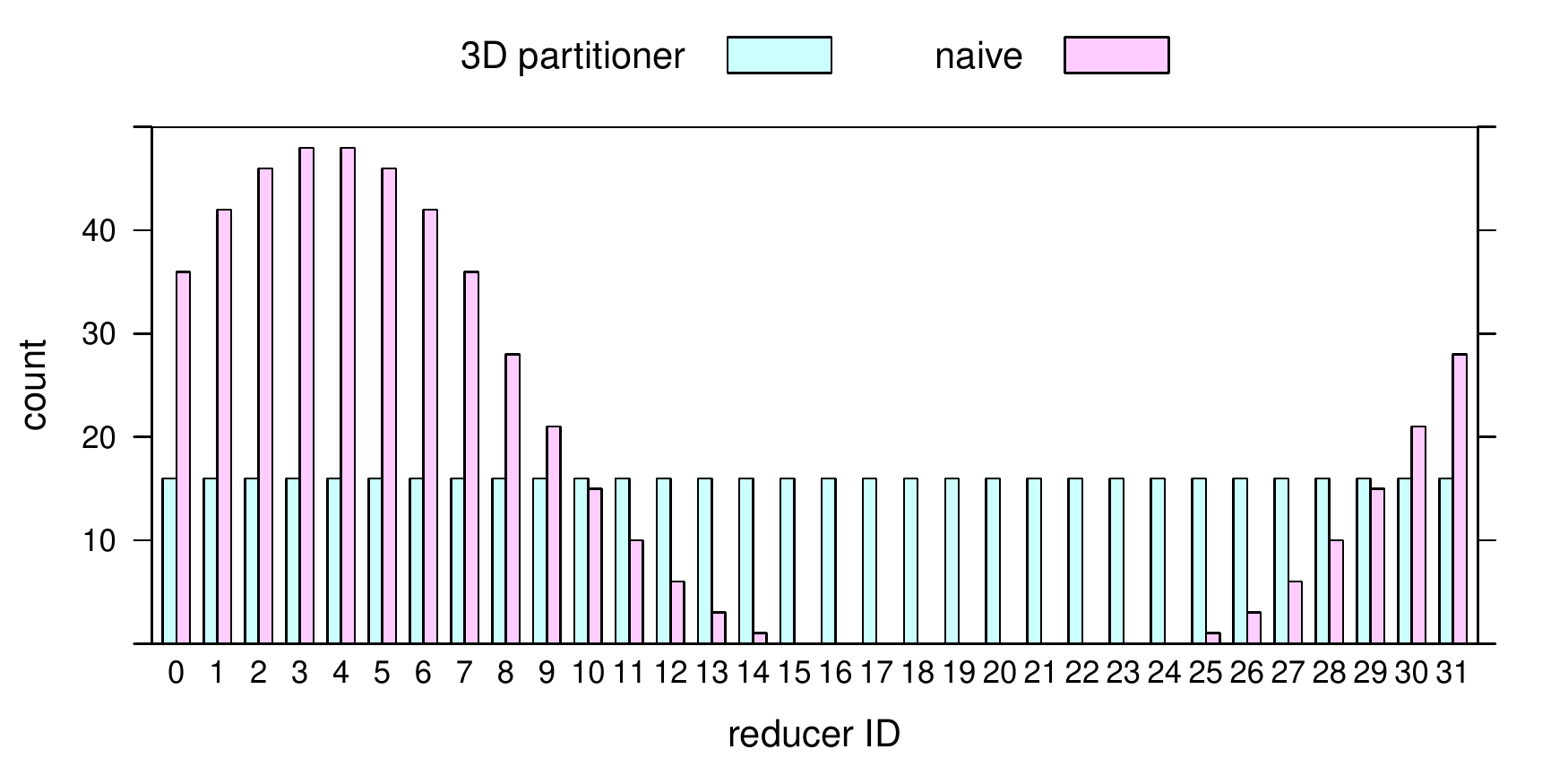}
}
\caption{The figure reports the number of reducers in the $i$-th
  reduce task using a simple partitioner and the one proposed in
  Algorithm~\ref{algo:partitioner} when $\sn=32000$, $\sm=4000$, and
  $\rho=8$ (only first round).}
\label{fig:partitioner}
\end{figure}

\subsection{Map and reduce implementation}
Map functions are straightforward implementations of the algorithms  in
Section~\ref{sec:algorithms}. Reduce functions, while being adherent to
the given specifications, present a few caveats from both
a correctness and a performance perspective.
Reduce functions in Hadoop take input values as instances of the {\tt
  Iterable} interface, giving Hadoop developers the freedom of
changing the actual underlying implementation without breaking client
code. If one needs all the input values at the same time to perform
the reduce computation (as is our case), then a \emph{deep copy} of
each value must be saved in local variables or collections to ensure
correctness. Simply storing the reference does not work, since the
implementation of the {\tt Iterable} interface used by Hadoop returns
a mutable object for each {\tt next} call (at least in the
  most recent version, {\tt 2.4.0}). 
Failing to perform a deep copy of
the input will make the reducer use the same block three times instead
of the proper blocks from matrices $A$, $B$, and $C$.
Note that by performing this copy we incur an unavoidable performance
penalty. 

As for the performance in general, it is crucial to use a fast linear
algebra library for local multiplications.  For the dense case, we use
the {\tt JBLAS} library\footnote{{\tt http://jblas.org/}}, that
provides Java bindings to the fast BLAS/LAPACK
routines. Unfortunately, {\tt JBLAS} does not support sparse matrices,
for which we use {\tt MTJ}\footnote{{\tt
    https://github.com/fommil/matrix-toolkits-java/}}. Although
this is the fastest library for sparse matrix multiplication among the
tested ones, it is orders of magnitude slower than {\tt JBLAS} on
inputs of the same size
thus preventing us to perform complete experiments on sparse matrices.

\begin{algorithm2e}[t]
  \caption{Partitioner of the 3D dense algorithm}
  \label{algo:partitioner}

  \DontPrintSemicolon

  \SetKwProg{Partitioner}{Partitioner:}{ }{}

  \Partitioner{input: $(i,h,j)$, number of reduce tasks $T$;}{
    \Let{$B$}{$\lfloor \rho n/(mT) \rfloor$}\;
    \Let{$k$}{$i\rho n/m + j\rho + (h \mod \rho)$}\;
    \lIf{$z<B\cdot T$}{\Return{$k/B$}}
    \lElse{\Return{Random number in $[0, T)$}}
  }

\end{algorithm2e}

\subsection{Hadoop configuration}\label{app:conf}

As mentioned in Section~\ref{sec:preliminaries}, we used both a
in-house cluster and Amazon Web Services (AWS) to carry out our
experiments. Our cluster has been configured as follows. HDFS
replication has been set to one, that is, redundancy has been
disabled. We found that the default replication, 3, degraded
performance of $\approx 5\%$ while not providing benefits in our
experimental setting. Each machine of the cluster runs two mappers and
two reducers, each with 3GB of memory, in order to allow us to use
matrix blocks with many elements.

As for the configuration of Hadoop instances provided by AWS, we used the
default configuration, customized by Amazon for each machine instance
type.
The rationale is that AWS provides Hadoop as-a-service, enabling
users to concentrate on their application rather than on the
configuration and management of an Hadoop cluster.

\subsection{Partitioner}\label{sec:partitioner}
The keys of the dense 3D algorithms proposed in the
Section~\ref{sec:algorithms} are triplets $(i,j,k)$.  A partitioner
receives as input a key and the number $T$ of reduce tasks in the
system and returns the identifier $t\in[0,T)$ of the reduce tasks that
will be responsible of the key.  When the key is a triplet $(i,j,k)$,
a common partitioner for these keys is $t = (31^2 i+ 31 j + k) \mod
T$.  However, as Figure~\ref{fig:partitioner} shows, this function is
not able to evenly distribute reducers among the $T$ reduce tasks.

We propose an alternative partitioner. In the $r$-th round there are
$\rho(n/m)$ reducers denoted by keys $(i,h,j)$ with $0\leq i,j< \snm$
and $(i+j+r\rho) \mod \snm \leq h < (i+j+(r+1)\rho)\mod \snm$.  The
partitioner uniquely maps each key $(i,h,j)$ in the range $[0,\rho
n/m)$ by setting $z= i \rho n/m + j\rho + h'$ with $h'=h \mod
\rho$. We note that the mapping consists of a row-major ordering on
the first, third and second coordinates, where the second coordinate
has been adjusted to be in the range $[0,\rho)$.  Keys mapped in
$[0,T\lfloor \rho n/(mT)\rfloor)$ are evenly distributed among the
reduce tasks, while the remaining at most $T-1$ keys are randomly
distributed among the tasks.  The pseudocode is given in
Algorithm~\ref{algo:partitioner}.  This partitioner is also used
in the 3D sparse algorithm, while a slightly different approach is used
for the 2D algorithm.


\section{Experiments}\label{sec:experiments}
In this section we propose our experimental analysis of the \m
library.  To better capture the impact of the parameters of the
computational model on the performance, we focus on the dense
algorithm which heavily loads the communication and computational
resources and whose performance does not depend on the particular
input matrix. We aim at answering the following questions: 
\begin{description}[itemsep=-1ex,topsep=-2ex]
 \item[Q1:] How much does the subproblem size $m$ affect the performance?
 \item[Q2:] Are the performance of multi-round algorithms comparable with a
monolithic algorithm?
 \item[Q3:] Which are the major factors affecting the running time?
 \item[Q4:] Does the algorithm scale efficiently with processor
   number?
 \item[Q5:] How much is the performance gap between the 2D and 3D
   approaches?
 \item[Q6:] Does the sparse algorithm efficiently exploit the input sparsity?\\
\end{description}

In Section~\ref{sec:experiments-cluster} we discuss the results on the
in-house cluster. Our claims are also supported by the experiments
carried out in the cloud service Amazon Elastic MapReduce and proposed
in Section~\ref{sec:experiments-amazon}.

\subsection{Experiments on the in-house cluster}
\label{sec:experiments-cluster}

In this section we report our investigations on the in-house cluster
targeting all of the above questions.

\begin{figure}[t]
  \centering
  \includegraphics[width=\columnwidth]{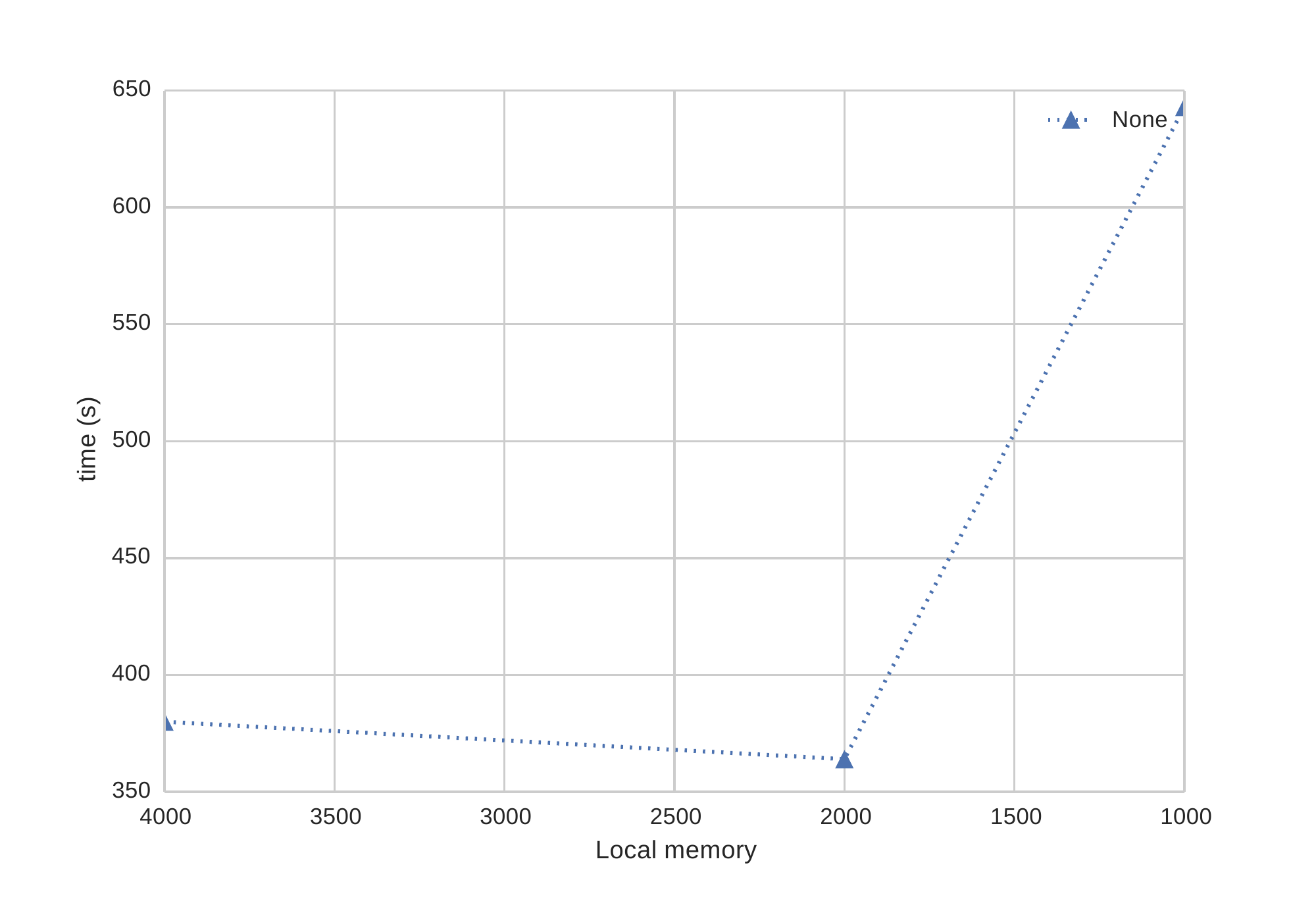}
  \caption{Time vs subproblem size with $\sn\in\{16000,32000\}$. The
    label \emph{max} (resp., \emph{min}) means that it has been used
    maximum (resp., minimum) replication, that is $\rho=\snm$
    (resp., $\rho=1$).}
  \label{fig:local-memory}
\end{figure}

\paragraph{Q1: Impact of subproblem size $m$.}
The total amount of shuffled data in all rounds is $\BO{n\snm}$. This
quantity is independent of the replication factor $\rho$, but
increases as $m$ decreases.  Since the performance of an algorithm
strongly relies on the total amount of shuffled data, it is reasonable
to select a large $m$ and then to decompose the problem into larger
subproblems.  On the other hand, a large value of $m$ reduces
parallelism since there are $(n/m)^{3/2}$ independent subproblems.
Moreover, it increases the load on each machine since a matrix
multiplication of size $\sm\times \sm$ is computed by each reducer
with $\BT{m^{3/2}}$ work and $\BT{m}$ memory, which may exceed
hardware limits.

This fact is highlighted in Figure~\ref{fig:local-memory} which shows
the running time for $\sn\in\{16000, 32000\}$ and
 $\sm\in\{1000, 2000, 4000\}$;
the experiments have been carried out with $\rho=\snm$ and with
$\rho=1$, that is with a monolithic approach and the extreme case of
multi-round.  All experiments show that the performance improves with
larger values of $m$, but the gain decreases for larger values: with
input side $\sqrt{n}=32000$ and $\rho=\snm$, the performance gain when $\sm$
moves from $1000$ to $2000$ is $1.99$, while it is $1.12$ when it
moves from $2000$ to $4000$.  The monolithic approach
is slightly less sensible to variations of $m$ since it can exploit
much more parallelism in each round than a multi-round approach.  The
value $\sm=8000$ is missing in the experiments since all executions
failed due to an out-of-memory error, thus reinforcing the previous claim that
larger values of $m$ may exceed hardware limits. Although each submatrix
requires
 488MB, the actual memory requirements are much more due to the
internal structure of Hadoop, which may keep in the memory of each
processor the submatrices outputted by map tasks as well as the
submatrices required in input by the reduce tasks. Advanced
optimizations can increase the maximum value of $m$, but do not
significantly change the proposed results.

\paragraph{Q2: Impact of replication factor $\rho$.}\label{sec:replfactXXX}

\begin{figure*}
  \centering
  \begin{subfigure}[b]{\columnwidth}
    \includegraphics[width=\textwidth]{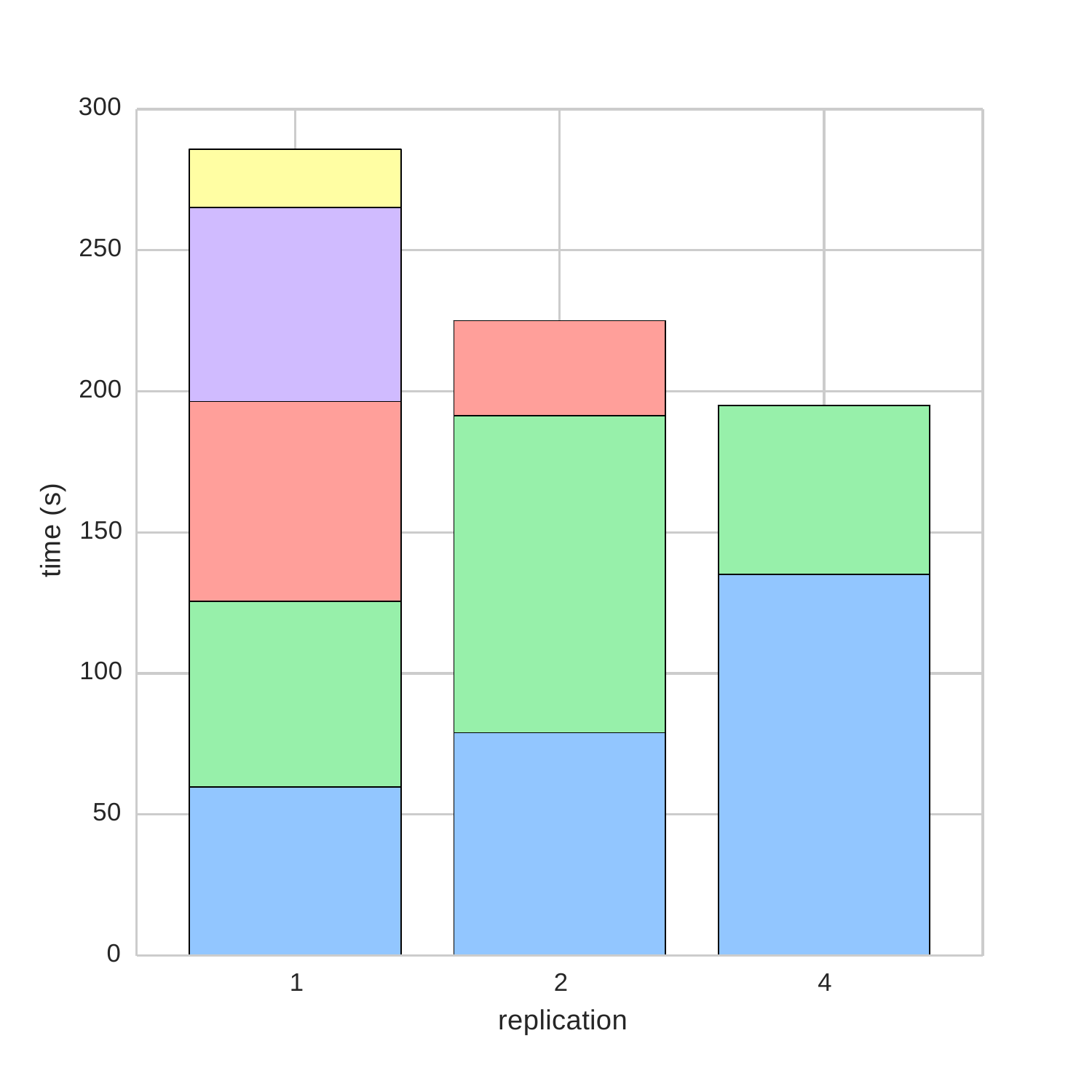}
    \caption{ $\sn=16000$, $\rho=\{1,2,4\}$.}
    \label{fig:replication-plot-16000}
  \end{subfigure}
  \hfil
  \begin{subfigure}[b]{\columnwidth}
    \includegraphics[width=\textwidth]{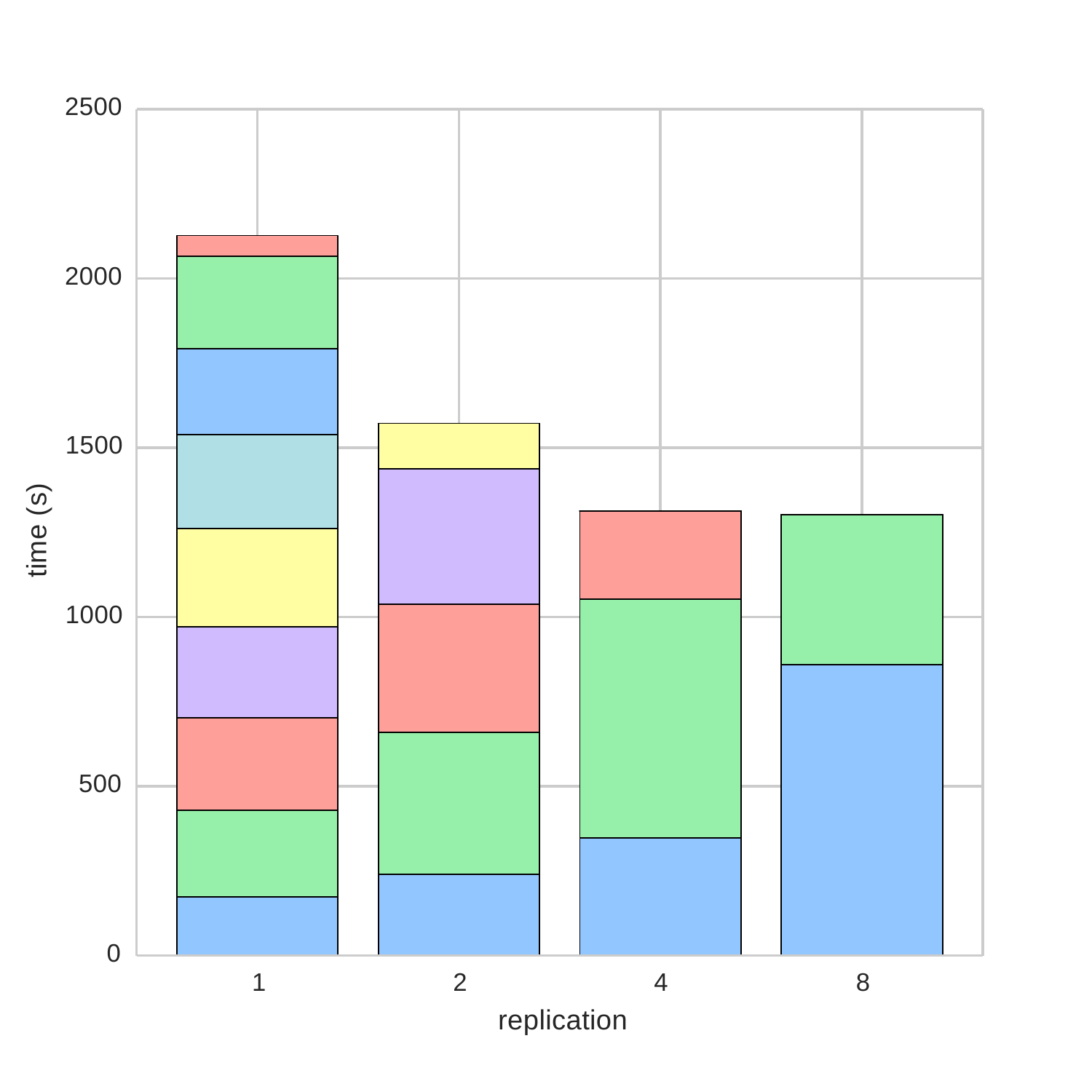}
    \caption{$\sn=32000$, $\rho=\{1,2,4,8\}$.}
    \label{fig:replication-plot-32000}
  \end{subfigure}
  \caption{Time vs replication charts. In each bar of the histogram,
    the $i$-th colored block denotes the time of the $i$-th round.}
\end{figure*}

\begin{figure*}
  \centering
  \begin{subfigure}[b]{\columnwidth}
    \includegraphics[width=\textwidth]{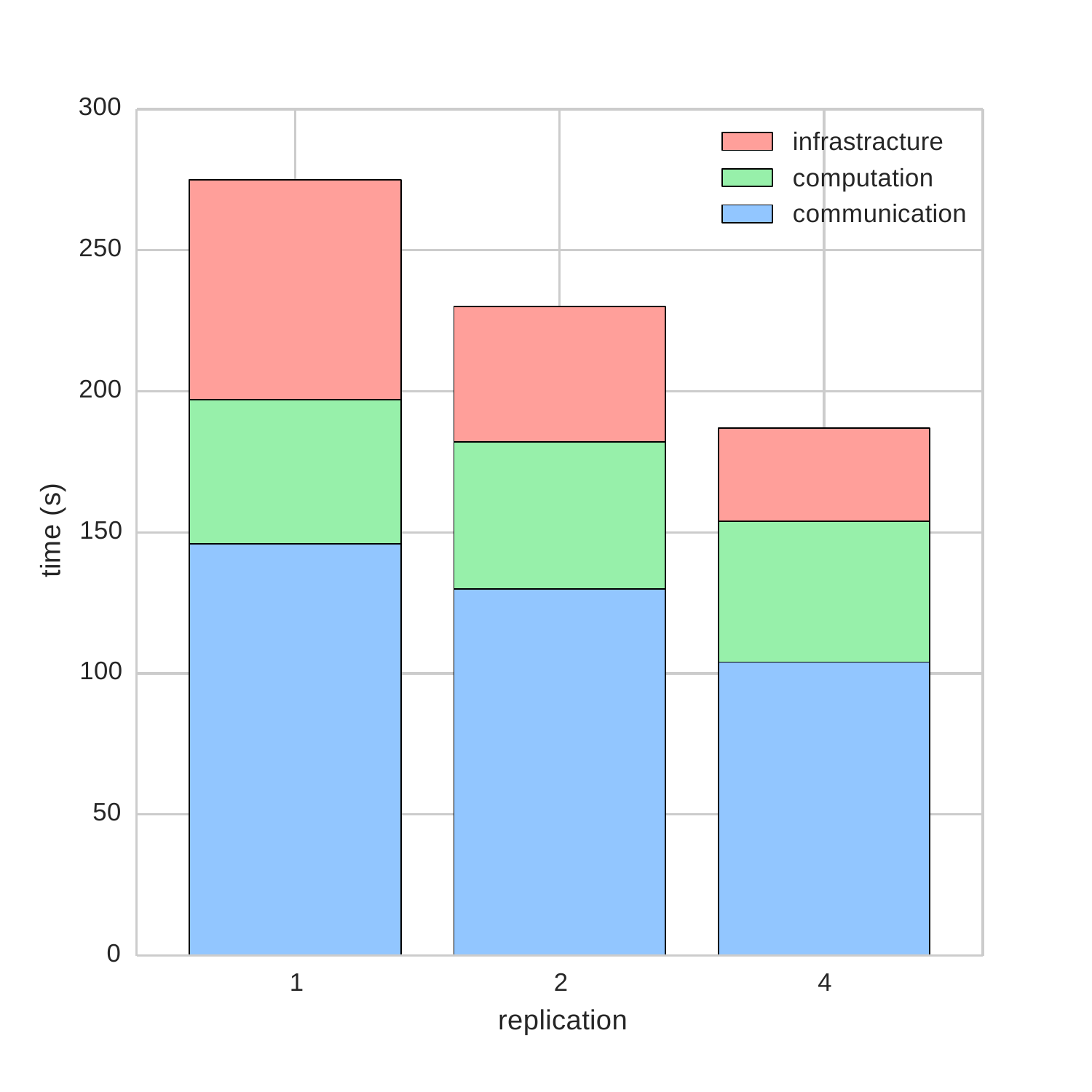}
    \caption{ $\sn=16000$, $\rho=\{1,2,4\}$.}
    \label{fig:components-plot-time-16000.pdf}
  \end{subfigure}
  \hfil
  \begin{subfigure}[b]{\columnwidth}
    \includegraphics[width=\textwidth]{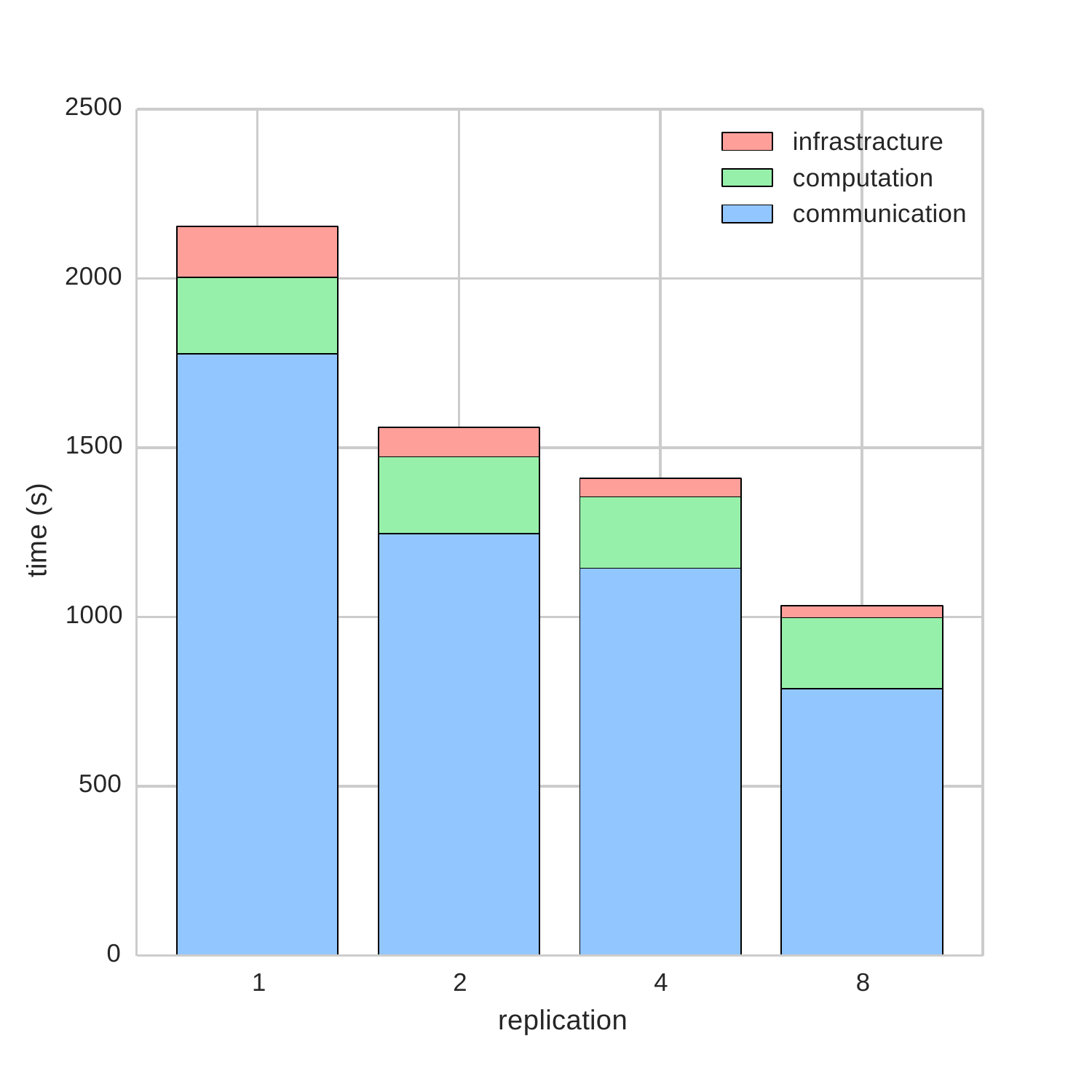}
    \caption{ $\sn=32000$, $\rho=\{1,2,4,8\}$.}
    \label{fig:components-plot-time-32000.pdf}
  \end{subfigure}
  
  \caption{Component cost vs replication. Each bar shows the time of
    the communication, computation, and infrastructure components.}
\end{figure*}

We are ready to study the performance of a multi-round approach. We
recall that, for any replication factor $1\leq \rho \leq \snm$, the
number of rounds is $\sn/(\rho \sm)+1$.
Figures~\ref{fig:replication-plot-16000}
and~\ref{fig:replication-plot-32000} investigate the running times
with different values of the replication factor and input size
$\sn=16000$ and $\sn=32000$, respectively.  For $\rho=\snm$, we get a
monolithic approach (i.e., two rounds).  Without any surprise, the
best running time is reached by the monolithic approach: this supports the
common practice to minimize round number, but only if the total amount of
communication remains unchanged.  However, we
observe that a multi-round approach increases the running time with
respect to the monolithic two-round approach by an average factor
$7\%$ for each additional round.  We believe that the main reason of
this increase is due to the distributed file system HDFS, which is used for
storing pairs between rounds and is optimized for writing and reading large
files.  When $\rho=\snm$, each
reduce task writes two large chunks of output pairs (one per round) on
 HDFS containing all outputs of the associated reducers in a given
round.  On the other hand, for values of $\rho$ smaller than $\snm$,
each reduce task writes a larger number of chunks of smaller
size. Although the total size of the written files is the same in each
approach, the system is not able to exploit the features of HDFS with
a multi-round approach.  We conjecture that the performance gap can be closed by
exploiting other implementations that use local file systems for
storing intermediate pairs (e.g., Spark).

Figures~\ref{fig:replication-plot-16000}
and~\ref{fig:replication-plot-32000} also show the running time of
each round: for each histogram bar, the $i$-th colored block from the
bottom denotes the time of the $i$-th round. The work is evenly
distributed among rounds and the cost of each round decreases with the
replication factor $\rho$. In each run, the last round is faster than
the remaining ones since the reduce function only performs the
addition of $\rho$ submatrices.
%
Finally, we observe that the algorithm efficiently scales with the
input size:  the running time, for any replication factor, increases
on the in-house cluster by a factor $\sim 8$ when the input side
doubles, which agrees with the cubic (in the matrix side) complexity of
the algorithm.

\paragraph{Q3: Cost analysis.}
We now analyze how communication, computation and infrastructure
affect the running time. We define the following three costs:

\begin{itemize}[itemsep=-1ex,topsep=-2ex]
\item \emph{Infrastructure cost $T_{infr}$:} It is the time for setting up
  the Hadoop system and each round. It is given by the time of the
  algorithm when no computation is done in the reduce functions and
  the value of each pair is replaced by an empty matrix (i.e., the
  amount of shuffled data is negligible).

\item \emph{Computation cost $T_{comp}$:} It is the time for performing the
  local computation. Let $\hat T_{comp}$ be the time of the algorithm
  when any local multiplication is performed on two locally created
  matrices (the cost of the creation is negligible) and the value of
  each pair is replaced by an empty matrix. Then, the computation cost
  is $\hat T_{comp}-T_{infr}$.

\item \emph{Communication cost $T_{comm}$:} It is the time for exchanging
  pairs between mappers and reducers, and includes the costs of
  reading/writing data on HDFS and of shuffle steps. Let $\hat
  T_{comm}$ be the time of the algorithm when no computation is done
  in the reduce function (the output pair is a copy of an input
  pair). Then, the communication cost is $\hat T_{comm}-T_{infr}$.\\
\end{itemize}

We think that the infrastructure cost is a good approximation of the
true time for setting up the system and each round. However, this is
not true for the communication and computation costs since the
procedure does not take into account the concurrency between the three
components.  Nevertheless, these costs still provide an interesting
overview of the main factors determining the running time of an
algorithm. (To the best of our knowledge no tools for cost analysis are
available.)
Figures~\ref{fig:components-plot-time-16000.pdf}
and~\ref{fig:components-plot-time-32000.pdf} show the costs of the
three components for $\sn = 16000$ and $\rho\in\{1,2,4\}$, and for
$\sn=32000$ and $\rho\in\{1,2,4,8\}$, respectively.  The
infrastructure cost increases linearly with the round number, and the
average fixed cost of a round is $17$ seconds.  The computation cost
for a given input size is independent of the replication value, which
confirms that the work is evenly distributed among cluster nodes.  The
communication cost increases when $\rho$ increases as already noted for
Figures~\ref{fig:replication-plot-16000} and~\ref{fig:replication-plot-32000},
and dominates the total time.



\begin{figure}[t]
  \centering
  \includegraphics[width=\columnwidth]{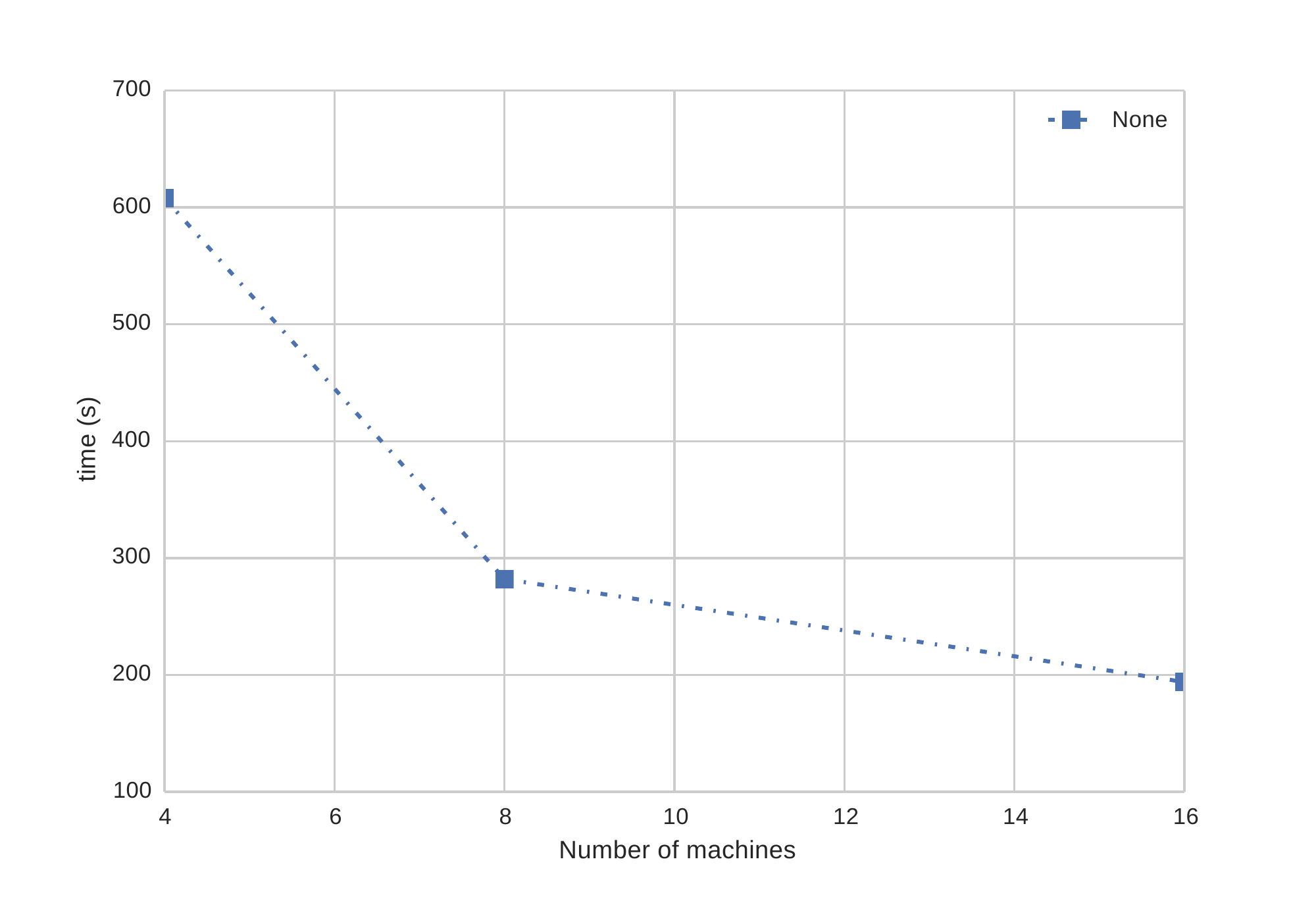}
  \caption{Time vs number of nodes with $\sn=16000$. The experiments
    have been carried out with replication $\rho\in\{1,2,4\}$ and
    $p\in\{4,8,16\}$ nodes.}
  \label{fig:scalability}
\end{figure}

\paragraph{Q4: Scalability.}
We now study the scalability of the algorithm by running the dense
algorithm on a smaller number of nodes.  For this experiment, we run
the dense algorithm with input size 16000 on 4, 8 and 16 nodes of the
in-house cluster. The results are given in
Figure~\ref{fig:scalability} with different replication factors
$\rho\in\{1,2,4\}$.  The algorithm scales efficiently
although there is a small reduction in the speed up with 16 nodes.  This may be
due to a loss in data locality and to a larger cost of the shuffle step when the
algorithm is run on 16 nodes.

\paragraph{Q5: 2D vs. 3D algorithms}

\begin{figure}[t]
  \centering
  \includegraphics[width=\columnwidth]{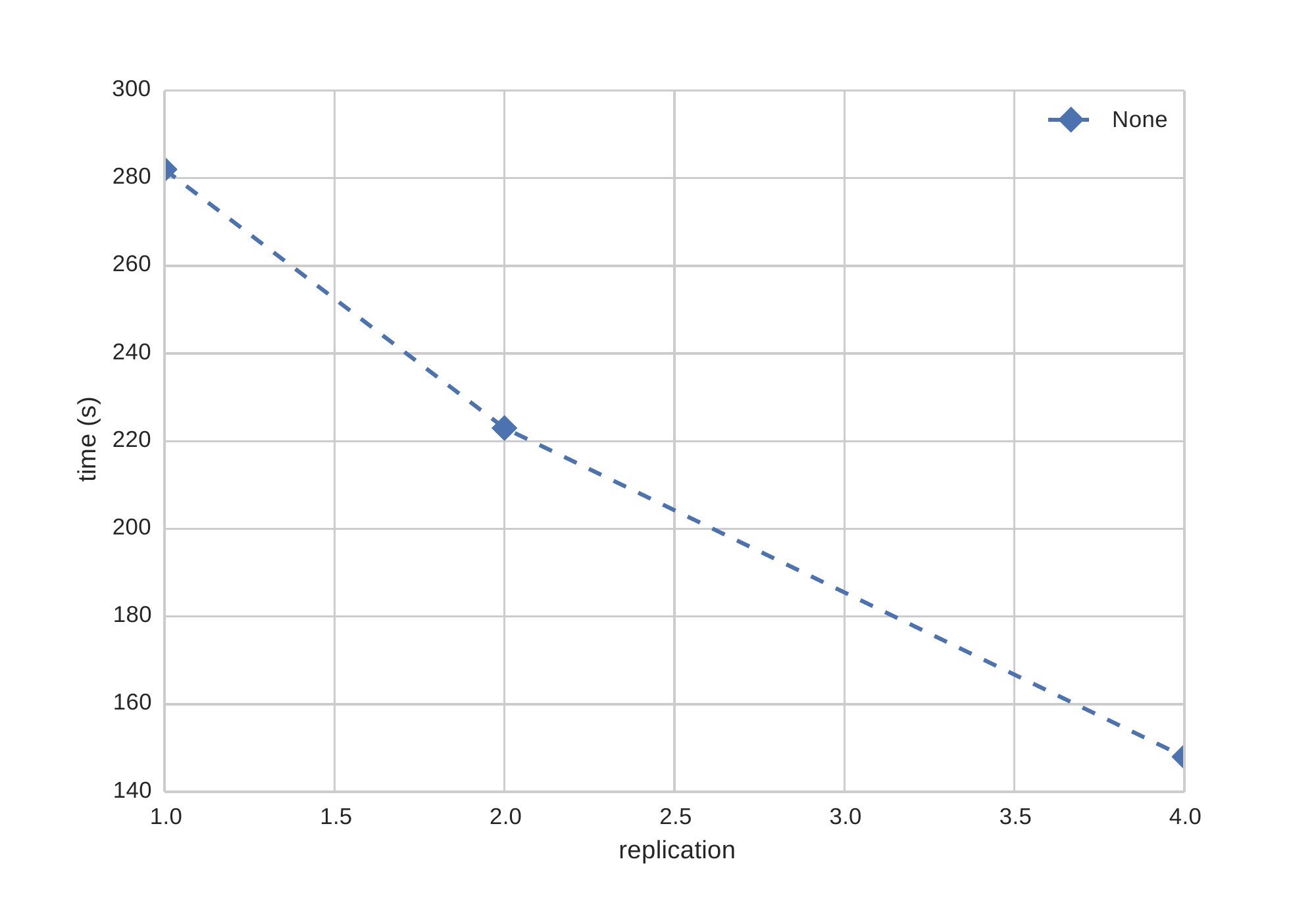}
  \caption{Comparison between the 2D and 3D approaches
    with $\sn=16000$. The replication factor is $\rho\in\{1,2,4\}$
    for the 3D approach, and $\rho\in\{1,2,4,8,16\}$ for the 2D
    approach.}
  \label{fig:comparison-2d.pdf}
\end{figure}

We now move to analyze the performance of the 3D multiplication
strategy compared with the 2D
strategy. Figure~\ref{fig:comparison-2d.pdf} clearly shows that the 3D
approach has a significant performance advantage. This is due to the
fact that the total shuffle size is, for the 3D approach,
$\BO{n\sqrt{n/m}}$, whereas for the 2D approach it is
$\BO{n^2/m}$. Since the major bottleneck in MapReduce is
communication, the 2D approach incurs a significant penalty.

\paragraph{Q6: Sparse matrices.}

\begin{figure}[t]
  \centering
  \includegraphics[width=\columnwidth]{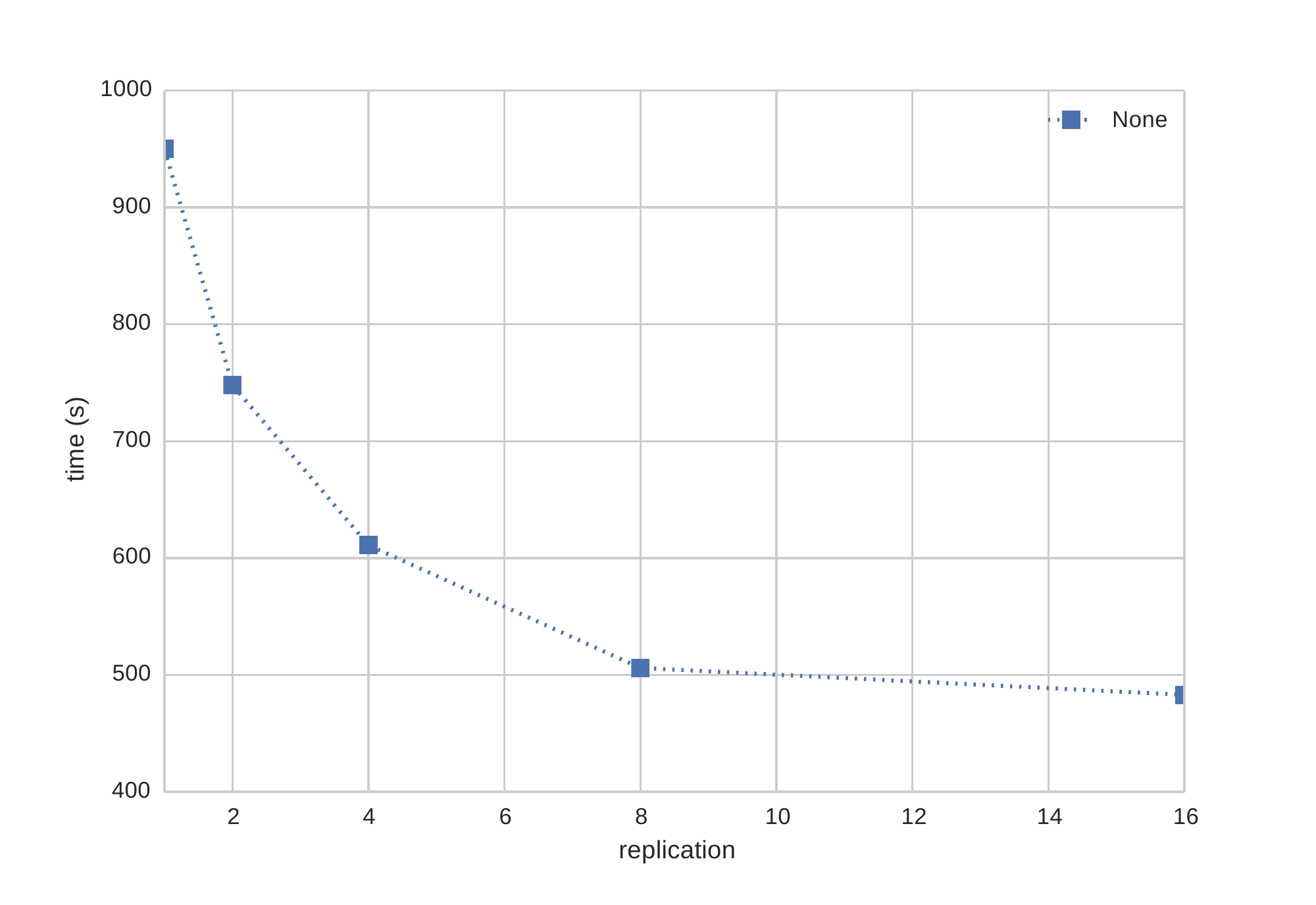}
  \caption{Time vs replication for sparse matrix
    multiplication with $\sn\in\{2^{20}, 2^{22}, 2^{24}\}$ and an
    average of 8 non-zero entries per row and per column.  For each
    input size, we consider all possible replication factors, from 0
    to $\sn/\smp$ ($\smp$ is set to $\{2^{18}, 2^{19}, 2^{20}\}$,
    respectively).}
  \label{fig:sparse-big-blocks.pdf}
\end{figure}

We now investigate the performance of the 3D sparse
algorithm and show that it improves performance by exploiting input
sparsity.  Figure~\ref{fig:sparse-big-blocks.pdf} shows the running
times with different replication factors of the 3D sparse algorithm
with two input Erd\"{o}s-R\'enyi matrices with $\sn\in \{2^{20},
2^{22}, 2^{24}\}$, and an average of 8 non-zero entries per row and
per column (i.e., $\delta\in\{1/2^{17}, 1/2^{19}, 1/2^{21}\}$,
respectively). Since the output matrices have expected density
$\delta_O\in\{1/2^{14}, 1/2^{16}, 1/2^{18}\}$ (see
Section~\ref{sec:mmult}), the input matrices are partitioned into
submatrices of size $2^{18}$, $2^{19}$ and $2^{20}$, respectively, so
that the expected number of non-zero elements in each submatrix of the
output is comparable with the subproblem size of the dense case. This
shows that, by exploiting the sparseness of the input matrix, we can
tackle much bigger problems, under the same memory constraints.  As
mentioned in Section~\ref{sec:implementation}, we avoided the
computation of local products because of the lack of an efficient
Java implementation of sequential sparse matrix multiplication. However, as we
have experimentally
shown, the time spent computing local products is a fixed additive
factor, whereas the communication is the dominant component, thus
these experiments clearly show the tradeoffs between round number 
and time for sparse matrix multiplication.

\subsection{Experiments on Amazon Elastic MapReduce}
\label{sec:experiments-amazon}

In this section we report our investigations in Amazon Elastic
MapReduce (EMR) targeting the questions Q1, Q2 and Q3 described at the
beginning of Section~\ref{sec:experiments}.

\paragraph{Q1: Impact of subproblem size $m$.}
These experiments show a similar behavior on c3.8xlarge and i2.xlarge
instances, with $\sm=4000$ being the optimal choice.  Interestingly,
smaller instance types (like c3.xlarge) require $\sm=2000$ for
avoiding memory errors.  All the following experiments assume
$\sm=4000$.

\paragraph{Q2: Impact of replication factor $\rho$.}
The experiments have been carried out on EMR with c3.8xlarge instances
and the results are in Figures~\ref{fig:amazon-replication-16000.pdf}
and~\ref{fig:amazon-replication-32000.pdf} with
$\sn\in\{16000,32000\}$, respectively.  Running times significantly
increase with respect to experiments on the in-house cluster: the
running times with $\sn=16000$ (resp., $\sn=32000$) on EMR are on the
average $4.7$ (resp., $1.4$) times larger than the ones on the
in-house cluster, even if the computational resources of the two
systems are somewhat similar.  It is interesting to note that the gap
decreases with larger input sizes.  The average performance loss with
respect to the monolithic two-round approach is $17\%$ for each
additional round.

As for the scalability, we observe that the scaling factor on EMR is
$\sim 5$, smaller than the one achieved on the in-house
cluster ($\sim 8$). This is due to high fixed costs which are not
efficiently amortized with small inputs.

\paragraph{Q3: Cost analysis.}
The cost analysis on EMR with c3.8xlarge instances is reported in
Figures~\ref{fig:amazon-components-time-16000.pdf}
and~\ref{fig:amazon-components-time-32000.pdf} for $\sn = 16000$ and
$\rho\in\{1,2,4\}$, and for $\sn=32000$ and $\rho\in\{1,2,4,8\}$,
respectively.  It should be noted that the average computation cost is
similar to the respective component in the in-house cluster, although
there is a larger variance due to the unpredictable load of the
physical machines of EMR. The average infrastructure cost is 30
seconds.

Finally, we report in
Figure~\ref{fig:amazon-i3-components-time-16000.pdf} the cost analysis
with $\sn=16000$ and $\rho\in\{1,2,4\}$ on i2.xlarge instances. A
i2.xlarge instance has faster disks optimized for random I/Os but
slower network than a c3.8xlarge instance.  However, we observe that
the communication costs are smaller than the respective costs in
Figure~\ref{fig:amazon-components-time-16000.pdf} for c3.8xlarge
instances.  This fact supports the claim in
Section~\ref{sec:experiments-cluster} (question {\bf Q2}) that the
main bottleneck is the inability of HDFS to efficiently read/write
small chunks of data.

\begin{figure}[t]
  \centering
  \includegraphics[width=\columnwidth]{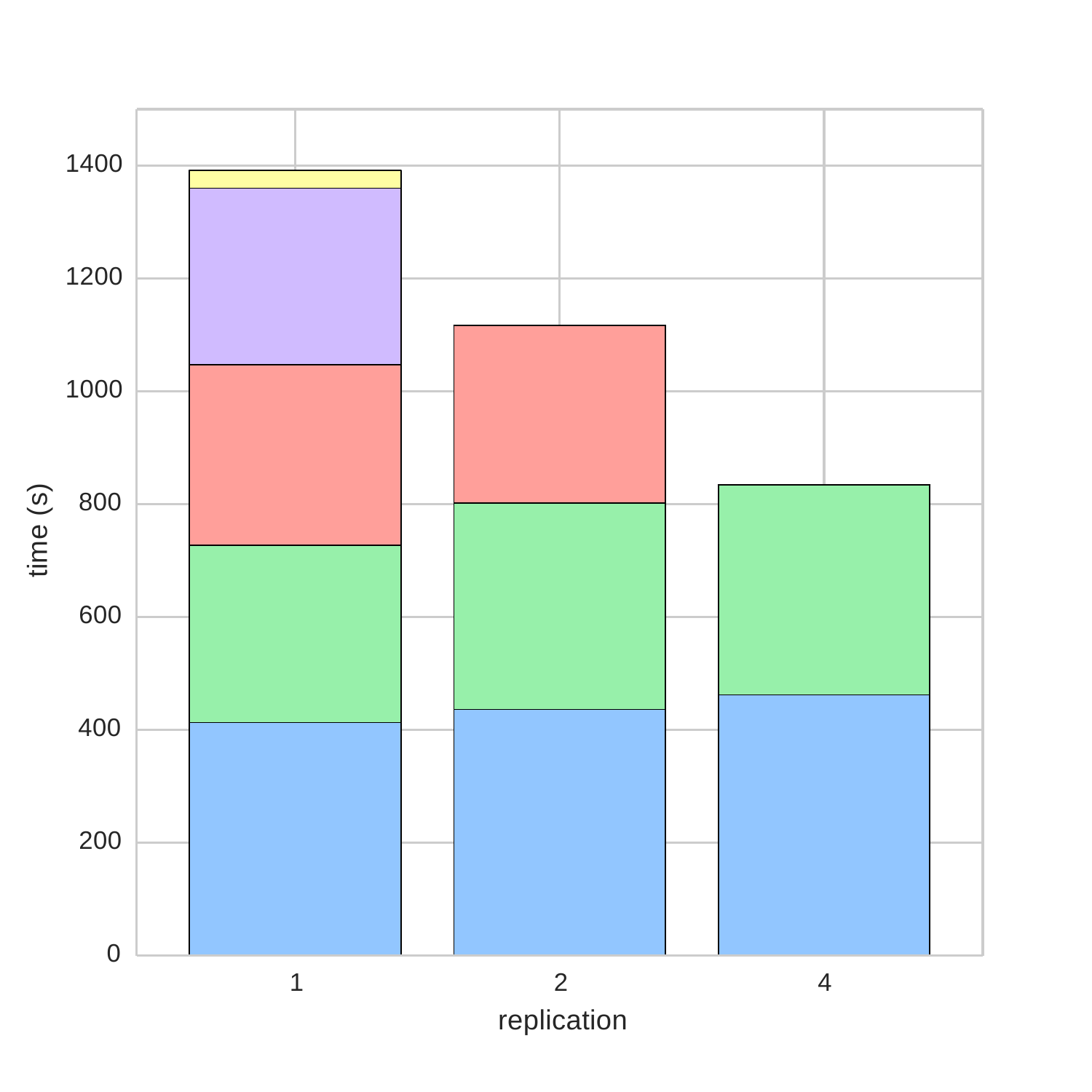}
  \caption{Time vs replication with $\sn=16000$ and $\rho=\{1,2,4\}$
    on c3.8xlarge instances. In each bar of the histogram, the $i$-th
    colored block denotes the time of the $i$-th round.}
  \label{fig:amazon-replication-16000.pdf}
\end{figure}

\begin{figure*}
  \centering
  \begin{subfigure}[b]{\columnwidth}
    \includegraphics[width=\textwidth]{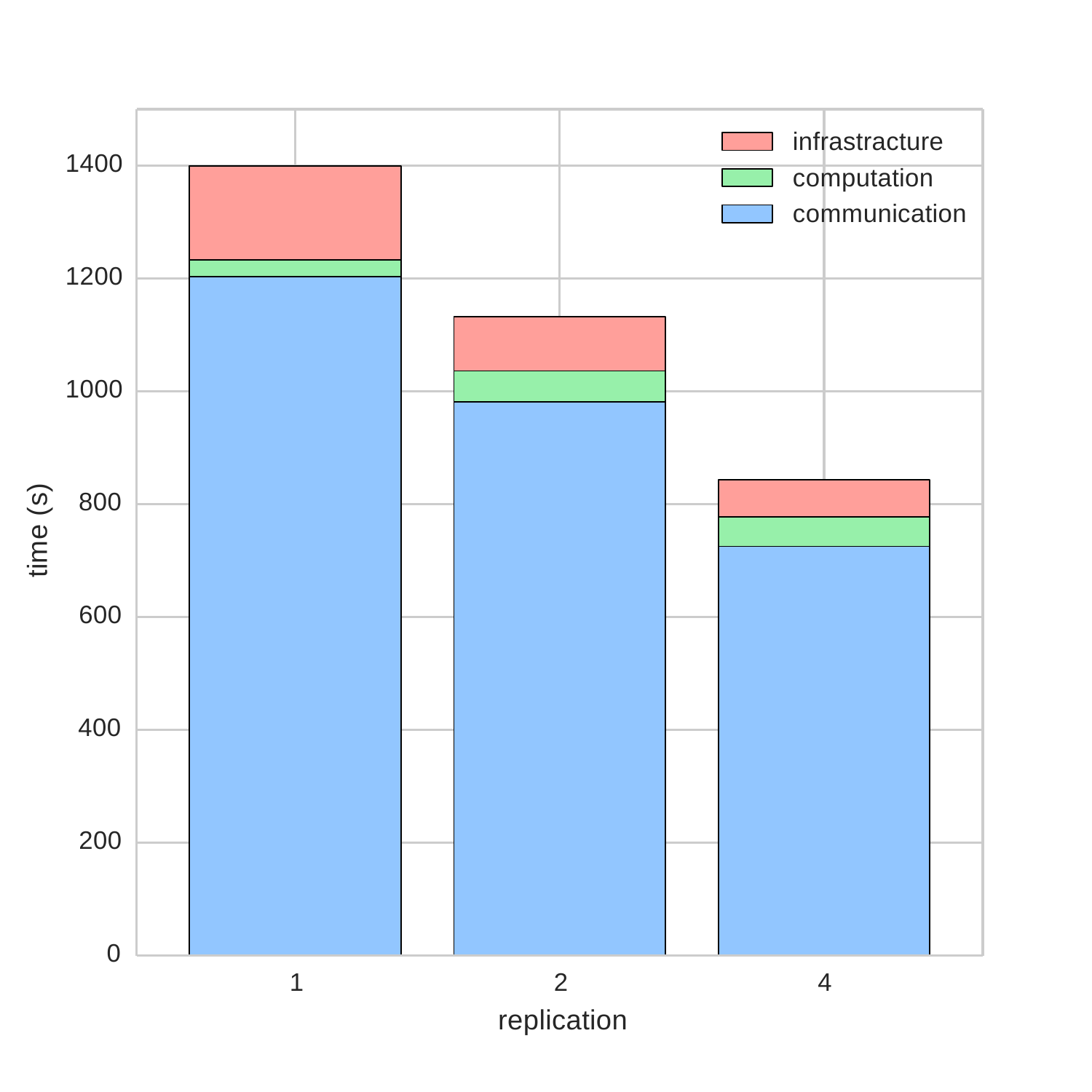}
    \caption{ Component cost vs replication with $\sn=16000$,
      $\rho=\{1,2,4\}$ on c3.8xlarge instances. Each bar shows the
      time of the communication, computation, and infrastructure
      components.}
    \label{fig:amazon-components-time-16000.pdf}
  \end{subfigure}
  \hfil
  \begin{subfigure}[b]{\columnwidth}
    \includegraphics[width=\textwidth]{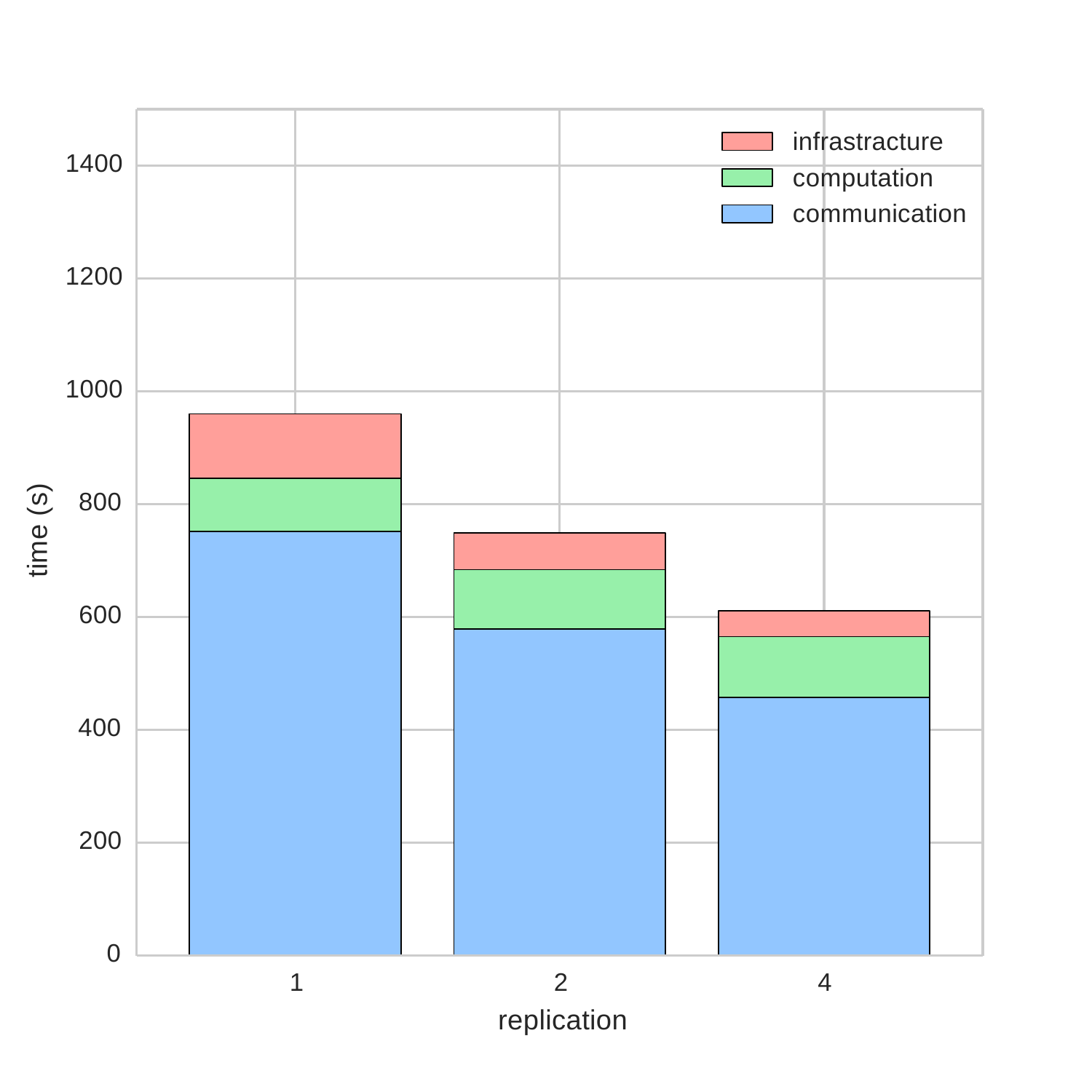}
    \caption{ Component cost vs replication with $\sn=16000$,
      $\rho=\{1,2,4\}$ on i2.xlarge instances. Each bar shows the time of the
      communication, computation, and infrastructure components.}
    \label{fig:amazon-i3-components-time-16000.pdf}
  \end{subfigure}

  \caption{Component cost for multiplying $16000\times 16000$ matrices
    on Amazon EMR, using different instance types. Instances of type
    i2.xlarge (Figure~\ref{fig:amazon-i3-components-time-16000.pdf})
    have a faster disk with respect to c3.8xlarge instances
    (Figure~\ref{fig:amazon-components-time-16000.pdf}).}
\end{figure*}

\begin{figure*}
  \centering

  \begin{subfigure}[b]{\columnwidth}
    \includegraphics[width=\textwidth]{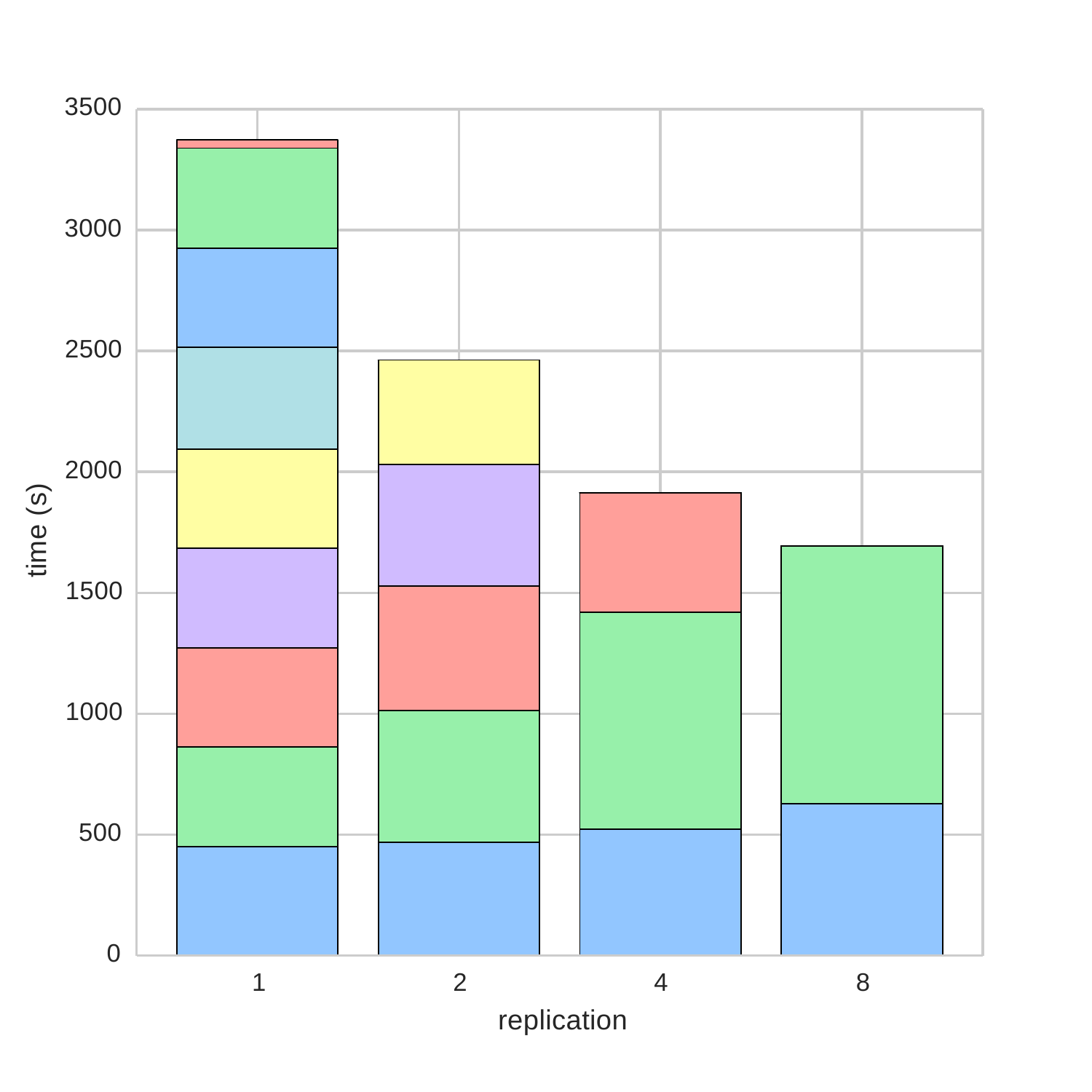}
    \caption{ Time vs replication with $\sn=32000$, $\rho=\{1,2,4,8\}$
      on c3.8xlarge instances. In each bar of the histogram, the
      $i$-th colored block denotes the time of the $i$-th round.}
    \label{fig:amazon-replication-32000.pdf}
  \end{subfigure}
  \hfil
  \begin{subfigure}[b]{\columnwidth}
    \includegraphics[width=\textwidth]{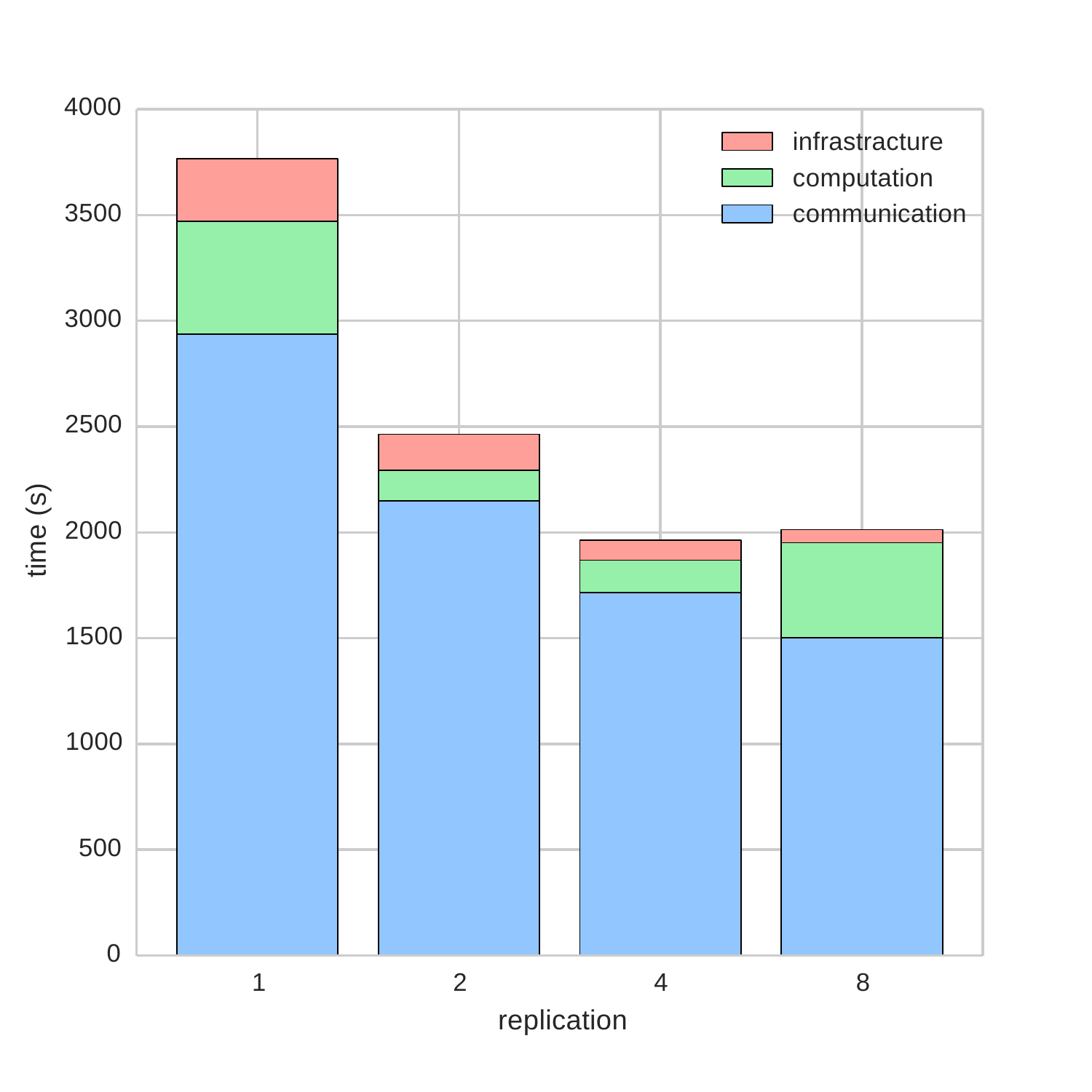}
    \caption{ Component cost vs replication with $\sn=32000$ and
      $\rho=\{1,2,4,8\}$ on c3.8xlarge instances. Each bar shows the
      time of the communication, computation, and infrastructure
      components.}
    \label{fig:amazon-components-time-32000.pdf}
  \end{subfigure}

  \caption{Experiments on Amazon EMR for multiplying $32000\times
    32000$ dense matrices.}

\end{figure*}


\section{Conclusion}\label{sec:conclusion}
In this paper we have proposed the Hadoop library \m for performing
dense and sparse matrix multiplication in MapReduce by exploiting the
theoretical results in~\cite{PietracaprinaPRSU12}, and we have carried
out an extensive experimental study on an in-house cluster and Amazon
Web Services.  The results give evidence that multi-round algorithms
can have performance comparable with monolithic ones (assuming a
similar total amount of communication) even on the Hadoop framework,
which is not the most suitable MapReduce implementation for executing
multi-round algorithms.  Moreover, experiments suggest that the common
attitude to only focus on round number when designing MapReduce
algorithms could significantly reduce performance if it implies a
larger amount of total communication. This fact supports recent
computational models for MapReduce that mainly focus on the
minimization of the total communication
complexity~\cite{GoodrichSZ11}, or that aim at reducing round number
when an upper bound on the allowed total communication complexity is
given~\cite{PietracaprinaPRSU12}.

As future work, we plan to further investigate sparse matrix
multiplication on MapReduce and to test our algorithms on other
implementations of the MapReduce paradigm. In particular, we are
currently developing our algorithms in the Spark framework, where the
management of the input/output pairs of each round is more efficient
than Hadoop.


\section*{Acknowledgments}
This paper was supported in part by: MIUR of Italy under project
AMANDA; University of Padova under projects CPDA121378 and AACSE;
Amazon in Education Grant; equipment donation by Samsung. The authors
are grateful to Paolo Rodeghiero for initial discussions on Hadoop and
to Andrea Pietracaprina and Geppino Pucci for useful insights.


\end{document}